\pgfplotsset{compat=newest}
\newcommand{\bbr}{\mathbb{R}}
\newcommand{\fn}{\footnote}
\newcounter{modcount}
\newcommand{\modulo}[2]{%
\setcounter{modcount}{#1}\relax
\ifnum\value{modcount}<#2\relax
\else\relax
\addtocounter{modcount}{-#2}\relax
\modulo{\value{modcount}}{#2}\relax
\fi}
\newcommand{\tablepictures}[4][c]{\begin{tabular}[#1]{@{}c@{}}#2\vspace{0.5cm}\\(\alph{#4}) #3\end{tabular}}
\newcounter{gridsearch}
\newcommand{\tabpic}[2]{
    \stepcounter{gridsearch}
    \modulo{\thegridsearch}{2}
    \ifnum\value{modcount}=0
        \tablepictures[t]{#1}{#2}{gridsearch}\\[2.0cm]
    \else
        \tablepictures[t]{#1}{#2}{gridsearch}&~&
    \fi
}
\newtheorem{lemma}{Lemma}[section]
\newtheorem{proposition}[lemma]{Proposition}
\newtheorem{theorem}[lemma]{Theorem}
\newtheorem{corollary}[lemma]{Corollary}
\newtheorem{example1}[lemma]{Example}
\newtheorem{rem1}[lemma]{Remark}
\newtheorem{assumption}[lemma]{Assumption}
\newtheorem{alg1}[lemma]{Algorithm}
\newtheorem{me1}[lemma]{Mechanism}
\newtheorem{study1}[lemma]{Case Study}
\newenvironment{remark}{\begin{rem1}\rm}{\end{rem1}}
\newenvironment{example}{\begin{example1}\rm}{\end{example1}}
\newenvironment{study}{\begin{study1}\rm}{\end{study1}}
\newcommand{\T}{\top}
\newcommand\ind[1]{\mathbb{I}_{\{#1\}}}
\begin{document}

\title{Contingent Convertible Obligations and Financial Stability}
\author{Zachary Feinstein \thanks{Stevens Institute of Technology, School of Business, Hoboken, NJ 07030, USA, {\tt zfeinste@stevens.edu}}
\and
T.R.\ Hurd \thanks{McMaster University, Department of Mathematics \& Statistics, Hamilton, ON L8S 4L8, Canada}}
\date{\today}
\maketitle
\abstract{
This paper investigates whether a financial system can be made more stable if financial institutions share risk by exchanging contingent convertible (CoCo) debt obligations.  The question is framed in a financial network model of debt and equity interlinkages with the addition of a variant of the CoCo that converts continuously when a bank's equity-debt ratio drops to a trigger level.  The main theoretical result is a complete characterization of the clearing problem for the interbank debt and equity at the maturity of the obligations.  We then introduce stylized networks to study when introducing contingent convertible bonds improves financial stability, as well as specific networks for which contingent convertible bonds do not  provide uniformly improved system performance.  To return to the main question, we examine the EU financial network at the time of the 2011 EBA stress test to do comparative statics to study the implications of CoCo debt on financial stability. It is found that by replacing all unsecured interbank debt by standardized CoCo interbank debt securities,  systemic risk in the EU will decrease and bank shareholder value will increase. 

\bigskip

\noindent {\bf Keywords:} interbank networks, systemic risk, banking regulation, stress testing, network valuation.
}

\section{Introduction}\label{sec:intro}
Systemic risk is the risk of financial contagion -- the spread of losses from one bank or institution to other firms through interconnections in the financial system.  Financial contagion was exhibited most prominently in the global financial crisis of 2007-2009 and led to deep losses in the real economy. In the wake of that crisis, contingent convertible bonds (CoCos) were introduced into European markets. These are contingent debt claims that convert debt to equity at a pre-defined debt-equity ratio, and are designed to improve the solvency of the issuing company.  Specifically, CoCo bonds reduce the need for stressed firms to raise additional capital; these instruments additionally act as a mechanism to discipline banks ex-ante insofar as the conversion of debt to equity dilutes the shares of the original equity holders. As such, CoCo bonds have become prominent in discussions on financial regulation:  We refer to~\cite{glasserman2012coco,AKB2013coco,SMS2018coco} for detailed discussion of specific kinds of convertible bonds and methodologies for pricing these obligations in a single institution setting. 

The motivation of the present paper is to understand the implications of contingent convertible bonds on financial stability. Until recently, most systemic risk models were variations of the network clearing model first proposed by~\cite{EN01}, in which  firms have only  ``vanilla'' external and interbank assets and  liabilities.  Amongst the notable ``fixed network models'', ~\cite{RV13} accounts for bankruptcy costs, and \cite{suzuki2002,gourieroux2012} model equity cross-holdings between financial institutions. The related work of~\cite{AW_15}  also considers the joint impacts of bankruptcy costs and equity cross-holdings, assuming fractional recovery of the value of equity in case it is sold to pay off liabilities. 

The fixed network assumption no longer holds when derivatives such as contingent convertible bonds (CoCos) are included.  \cite{gupta2019coco,li2020default,li2020impact,calice2020contingent,balter2020contingent} consider contingent convertible debts within interbank payment networks under differing modeling setups; \cite{gupta2019coco}, for instance, presents a case study of the use of contingent convertible debts for reducing systemic risk when all such obligations are due outside the financial system. For networks with contingent payments based on generalized credit default swap obligations,~\cite{SSB16b,SSB17} show that the clearing payments may not be well-defined and, in fact, may not exist outside of specific network topologies.
\cite{minca2018reinsurance} studies reinsurance networks and is able to determine the existence and uniqueness of the realized liabilities and clearing payments between firms in the system.
\cite{banerjee2017insurance} presents a model in the vein of~\cite{BBF18} that is a time dynamic extension of~\cite{EN01} for interbank liabilities with  insurance obligations contingent on the state of the financial system.  

 In contrast to these cited papers,  we are able to characterize network clearing with contingent convertible bonds as an Eisenberg-Noe type fixed point condition on a single vector of firm assets, rather than joint payment and equity vectors as studied in~\cite{suzuki2002,gourieroux2012,AW_15}. By adopting a CoCo variant introduced  in~\cite{glasserman2012coco} with  \emph{fractional} conversion instead of the more usual all-or-nothing conversion, we avoid the possible non-existence of a fixed point due to cycles in which either the bond is converted or not,  as detailed in~\cite{KV16,banerjee2017insurance,balter2020contingent}.  With the innovation of fractional CoCos utilized in an interbank network, our first main contribution proves general conditions for the existence of maximal and minimal clearing solutions for the contingent network model with bankruptcy costs.   
 
The remaining contributions of this paper apply the contingent network framework to understand the conditions under which CoCo bonds improve financial stability. We provide a general discussion of when CoCo liabilities do and do not reduce the number of defaulting banks.  Extending analogous results of~\cite{AOT15} for vanilla interbank networks, we find that CoCo bonds -- either interbank or external -- improve financial stability in a symmetric and completely connected (hereafter called ``strongly symmetric'') system with idiosyncratic shocks. 
Finally, we undertake a sequence of case studies on financial stability within the EU financial system as it appeared in the 2011. Based on the 2011 EBA stress testing database, we study the effects of different counterfactual ways contingent convertible debts might be implemented.  This study leads to the observation that replacing unsecured interbank debt in the EU financial system by CoCo securities will usually improve systemic risk. These considerations allow us to test the heuristics imposed in the Basel III Accords which discourage the holding of interbank CoCo bonds in order to reduce the risk of financial contagion (see, e.g.,~\cite{AKB2013coco}).

The organization of this paper is as follows.  First, in Section~\ref{sec:1bank}, we present  contingent convertible bonds for a single bank, with an emphasis on the CoCo variant with fractional conversion. We show for the first time that there is a piecewise {\em non-linear} relationship between the value of a CoCo with fractional conversion and the external asset value.  We extend this setting to a network model of contingent convertible bonds in Section~\ref{sec:nbank} and present the main result of this work -- the existence of maximal and minimal clearing solutions for the interbank network with contingent convertible bonds.  We expound upon this result in Section~\ref{sec:discussion} with a general discussion of when contingent convertible bonds improve financial stability.  We prove that increased use of CoCo bonds systematically decreases the probability of defaults in a strongly symmetric financial system. However, we also provide an example of a small network showing that introducing such financial instruments may increase the number of defaults in the system.  
In Section~\ref{sec:eba}, we model the EU financial system based on the 2011 EBA stress testing data, and study system stability when different schemes for contingent convertible bonds are introduced. In Section~\ref{sec:conclusion}, our main conclusion is that a regulatory intervention that replaces all unsecured interbank debt in the EU financial system by a standardized CoCo security will create a win-win situation in which systemic risk is decreased while bank shareholder value increases. 

\noindent {\bf Notation:\ } The following  notation will be adopted throughout  this paper.  Let $x,y \in \bbr^n$ for some positive integer $n$, then
\begin{align*}
x \wedge y &= \left(\min(x_1,y_1),\min(x_2,y_2),\ldots,\min(x_n,y_n)\right)^\T,\\
x \vee y &= \left(\max(x_1,y_1),\max(x_2,y_2),\ldots,\max(x_n,y_n)\right)^\T,
\end{align*}
$x^+ = x \vee 0$, and $x^- = -(x \wedge 0)$.  Further, to ease notation, we will denote $[x,y] := [x_1,y_1] \times [x_2,y_2] \times \ldots \times [x_n,y_n] \subseteq \bbr^n$ to be the $n$-dimensional compact interval for $y - x \in \bbr^n_+$.  Similarly, we will consider $x \leq y$ if and only if $y - x \in \bbr^n_+$.

\section{A Bank with Contingent Convertible Obligations}\label{sec:1bank}

As a lead into our study of a network of ``banks'' (meaning the collection of financial institutions), in this section, we consider a single bank whose total assets are given by $a\ge 0$ and whose total debt obligations are structured as a combination of pure vanilla debt and a single class of contingent convertible (CoCo) debt with fractional conversion, as first introduced by~\cite{glasserman2012coco}.  This type of CoCo is parametrized by a trigger level $\tau > 0$ and conversion factor $q \in [0,1]$. In a discrete time setting, it converts a prescribed fraction of debt to equity at a set of conversion dates  $0 \leq T_1 < T_2 < \dots < T_K$, until the final maturity of the claims at $T_K$, if on these dates the equity of the issuer drops below $\tau$ times its debt. For each \$1 of CoCo debt converted to equity, the total firm equity increases by \$1 and the CoCo investor receives {\em new} equity shares with value  $q$. Finally, bankruptcy of the bank is defined to occur when the firm equity $E$ becomes negative: in this event, bankruptcy charges amounting to a fraction $1-\alpha$ of the firm's assets are paid, and CoCo holders and original shareholders receive nothing.

In this section, we summarize formulas based on \cite{glasserman2012coco} for the value of vanilla debt, CoCo debt and equity shares at the maturity  date $T=T_K$, as explicit functions of the independent variable, the firm's assets $a\in \bbr_+$. The remaining parameters are $\bar p^0\ge 0$, $\bar p^c\ge 0$, the face values of vanilla and CoCo debt  respectively. In the following we let $\lambda\in[0,1]$ denote the fraction of the CoCo that is converted to equity. We warn the reader that our setting differs from \cite{glasserman2012coco} in several respects: (i) our bankruptcy condition is the limited liability assumption typically made in the systemic risk literature; (ii) we value claims only at the maturity date $T_K$; (iii) the trigger parameter is defined differently from their parameter $\alpha$, with the relation $\tau=\alpha/(1-\alpha)$.

It turns out that the valuation functions are continuous on four subintervals, $[0,a_1), [a_1,a_2], (a_2,a_3), [a_3,\infty)$, with the breakpoints  $a_1=\bar p^0, a_2=(1+\tau) \bar p^0, a_3= (1+\tau) (\bar p^0+\bar p^c)$. These intervals correspond to four possible scenarios for the bank: (i)  the bank has defaulted and $\lambda=1$; (ii) the bank is solvent, but full conversion has occurred so $\lambda=1$; (iii) partial conversion has occurred with $\lambda\in (0,1)$; (iv) no conversion has occurred so $\lambda=0$ and $E(a)>\tau(\bar p^0+\bar p^c)$. On $[0,a_1)$, the value of equity and debt are given by $E(a)=\alpha a-\bar p^0$ and $D(a)=\alpha a$. On $[a_1,a_2]$, the value of equity and debt are given by $E(a)=a-\bar p^0$ and $D(a)=\bar p^0$. 
On $(a_2,a_3)$, the value of equity and debt are given by $E(a)=a-\bar p(\lambda(a))$ and $D(a)=\bar p(\lambda(a))$, where $\bar p(\lambda)=\bar p^0+(1-\lambda)\bar p^c$ and $\lambda(a)=\frac{\bar p^0+\bar p^c-a/(1+\tau)}{\bar p^c}$. On $[a_3,\infty)$, the value of equity and debt are given by $E(a)=a-\bar p^0-\bar p^c$ and $D(a)=\bar p^0+\bar p^c$. These formulas are summarized by 
\begin{eqnarray}
\label{E} E(a) & = & \ind{a \geq a_1}a+ \ind{a < a_1}\alpha a- \bar p(\lambda(a))\\
\label{D} D(a) & = & \ind{a \geq a_1}\bar p(\lambda(a))+\ind{a < a_1}\alpha a\\
\label{bar-p} \bar p(\lambda)&=&\bar p^0+(1-\lambda)\bar p^c\\
\label{lambda} \lambda(a)&=&1\wedge\left(\frac{\bar p^0+\bar p^c-a/(1+\tau)}{\bar p^c}\right)^+
\end{eqnarray}
and satisfy the identities $a=\bar p(\lambda(a))+E(a)=D(a)+E(a)^+$. Note that the total value of debt $D(a)$ is divided into vanilla debt $\ind{a \geq a_1}\bar p^0+\ind{a < a_1}\alpha a$ and CoCo debt $\ind{a \geq a_1}(1-\lambda(a))\bar p^c$.

The above formulas do not determine how the firm equity is split between original shareholders and the CoCo holders. An adaptation of the proof of Theorem 3.1 of \cite{glasserman2012coco}  shows that the equity fractions held by  the CoCo holders and original shareholders respectively are $c, 1-c$ where $c=c(a)$ satisfies the following differential equation for $a\in(a_2,a_3)$:
\begin{equation}
\label{c_ODE}
\frac{dc}{ da} = \frac{q(1 - c)}{\tau a}\ .  
\end{equation}
This important result can be derived directly as follows.  Suppose $a, a+\Delta a$ are two points in $(a_2,a_3)$ with corresponding values $\lambda,  \lambda+\Delta \lambda$ with $ \Delta \lambda = -\frac{\Delta a}{(1+\tau) \bar p^c}$. Then the CoCo assumptions imply that the corresponding fractions $c, c+\Delta c$ held by the CoCo owners must satisfy an equation for the total equity held by CoCo owners with additional conversion $\bar p^c \Delta \lambda$ and firm assets $a+\Delta a$:
\[  ( c+\Delta c)(E(a)+\Delta a-\bar p^c\Delta \lambda)=  c(E(a)+\Delta a) -q \bar p^c\Delta \lambda -c(1-q)\bar p^c\Delta \lambda\]
The differential equation arises in the limit $\Delta a\to 0$. Solving \eqref{c_ODE} with $c(a_3)=0$ and noting that $c'(a)=0$ for $a\notin[ a_2,a_3]$ leads to the desired formula for all $a\in [0,\infty)$:
\begin{equation}
\label{c_solution}
c(a) := 1-\left(\frac{(a\vee a_2)\wedge a_3}{a_3}\right)^{\frac{q}{\tau}}
\end{equation}

The form \eqref{c_solution} of $c(a)$ also implies that CoCos with $q>1$ are not financially desirable: they encourage speculation because the CoCo value is not monotonically increasing with $a$ near the trigger level. This means CoCo holders would sometimes reap benefits as the firm wealth drops. The following result rules out such non-monotonicity for all stakeholders of the firm under the condition $q\le 1$, which is a natural \emph{non-speculative condition} in the sense of \cite{banerjee2017insurance}.

\begin{theorem}\label{nondecreasing} The three types of liability securities on the bank are valued by the following functions of the total asset $a\ge 0$. 
\begin{enumerate}
  \item Vanilla debt has total value $\ind{a \geq a_1}\bar p^0+\ind{a < a_1}\alpha a$;
  \item CoCo liabilities have total value $(1-\lambda(a))\bar p^c + c(a) E(a)^+$;
  \item Original shareholders have total value $(1-c(a)) E(a)^+$.
\end{enumerate}
Under the \emph{non-speculative condition} $q\le 1$, all three values are non-decreasing in $a$ for $a\ge 0$. 
\end{theorem}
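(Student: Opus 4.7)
My plan is to first read off the three stated valuation expressions from equations \eqref{E}--\eqref{lambda} together with \eqref{c_solution}, and then establish monotonicity by a piecewise analysis on the four subintervals $[0,a_1)$, $[a_1,a_2]$, $(a_2,a_3)$, $[a_3,\infty)$, followed by a continuity check at the breakpoints $a_1,a_2,a_3$. For the identification step, the decomposition $\bar p(\lambda)=\bar p^0+(1-\lambda)\bar p^c$ says that total debt splits into a vanilla tranche of nominal value $\bar p^0$ and a CoCo tranche whose unconverted nominal value is $(1-\lambda(a))\bar p^c$; in bankruptcy ($a<a_1$) the proceeds $\alpha a$ are absorbed entirely by the vanilla creditors, while the remaining firm equity $E(a)^+$ is split in the proportions $c(a)$ to $1-c(a)$ between CoCo holders and original shareholders, exactly as claimed. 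As a sanity check, adding the three values recovers $a$ for $a\geq a_1$ and $\alpha a$ for $a<a_1$, consistent with the identity $a=\bar p(\lambda(a))+E(a)$.

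The vanilla debt value is non-decreasing by inspection: it equals $\alpha a$ on $[0,a_1)$, jumps upward from $\alpha\bar p^0$ to $\bar p^0$ at $a=a_1$ (since $\alpha\in(0,1]$), and is constant $\bar p^0$ thereafter. For the shareholder value $(1-c(a))E(a)^+$, I would substitute $\bar p(\lambda(a))=a/(1+\tau)$ into \eqref{E} to obtain $E(a)=a\tau/(1+\tau)$ on $(a_2,a_3)$, so that the shareholder value reduces piecewise to $0$, to $(1-c(a_2))(a-\bar p^0)$, to $(a/a_3)^{q/\tau}\,a\tau/(1+\tau)$, and to $a-\bar p^0-\bar p^c$. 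Each piece is manifestly non-decreasing, and continuity at the three breakpoints follows by direct substitution using $a_2=(1+\tau)\bar p^0$ and $a_3=(1+\tau)(\bar p^0+\bar p^c)$. Notably, monotonicity of the shareholder value does not require the non-speculative condition.

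The substantive step is monotonicity of the CoCo value $V(a):=(1-\lambda(a))\bar p^c+c(a)E(a)^+$ on the partial-conversion interval $(a_2,a_3)$, where substituting $\bar p(\lambda(a))=a/(1+\tau)$ and $1-\lambda(a)=(a/(1+\tau)-\bar p^0)/\bar p^c$ together with \eqref{c_solution} collapses $V$ to the closed form
\[
V(a)=a-\bar p^0-\left(\frac{a}{a_3}\right)^{q/\tau}\frac{a\tau}{1+\tau},
\]
whose derivative is
\[
V'(a)=1-\frac{\tau+q}{1+\tau}\left(\frac{a}{a_3}\right)^{q/\tau}.
\]
Since $(a/a_3)^{q/\tau}<1$ on $(a_2,a_3)$, the non-speculative condition $q\leq 1$ forces $(\tau+q)/(1+\tau)\leq 1$ and hence $V'(a)>0$. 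This derivative estimate is the main obstacle and the unique place where the hypothesis $q\leq 1$ is genuinely used; for $q>1$ the prefactor exceeds one and $V'$ can become negative near $a_3$, reproducing precisely the speculative non-monotonicity flagged in the discussion preceding the theorem. On the three remaining subintervals $V$ is either identically zero, linear in $a$ with non-negative slope, or identically $\bar p^c$, so piecewise monotonicity combined with the routine continuity check at $a_1,a_2,a_3$ completes the proof.
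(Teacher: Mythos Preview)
Your proposal is correct and follows essentially the same route as the paper: both reduce the CoCo value on $(a_2,a_3)$ to the closed form $a-\bar p^0-(a/a_3)^{q/\tau}\,a\tau/(1+\tau)$ and obtain the derivative $1-\frac{q+\tau}{1+\tau}(a/a_3)^{q/\tau}>0$ under $q\le 1$, which is the only nontrivial case. Your write-up is actually more complete than the paper's, which leaves the vanilla, shareholder, and remaining-interval cases to the reader; the only tiny slip is that the paper allows $\alpha\in[0,1]$ rather than $\alpha\in(0,1]$, but this does not affect your monotonicity argument for the vanilla debt.
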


\begin{proof} The stated valuation formulas are implicit in equations \eqref{E}, \eqref{c_solution}. We check the monotonicity condition on CoCo liabilities when $a\in(a_2,a_3)$, and leave verification of the remaining cases to the reader. For $a\in(a_2,a_3)$ we note that the bank's leverage ratio is $E/A=\frac{\tau}{1+\tau}$ and thus $(1-\lambda(a))\bar p^c=\frac{1}{1+\tau}a-\bar p^0$ and $E(a)=\frac{\tau}{1+\tau}a$. Then we find that 
\[\frac{d}{da}(1-\lambda(a))\bar p^c + c(a) E(a)^+ = \frac{1}{1+\tau}+\frac{\tau}{1+\tau}-\frac{q+\tau}{1+\tau}(a/a_3)^{q/\tau} >0\]
\end{proof}
\begin{example}\label{ex:setting-coco} Consider a bank with a stylized bank balance sheet consisting of vanilla liabilities $\bar p^0 = 10$ and CoCo liabilities $\bar p^c = 4$ structured with trigger level $\tau = 0.1$, meaning the CoCo bonds convert from debt to equity when the debt-equity ratio exceeds $1/\tau = 10$.  Suppose also the recovery rate at default is $\alpha=0.5$ while the conversion factor may take one of three values $q=0, 0.5, 1$.    
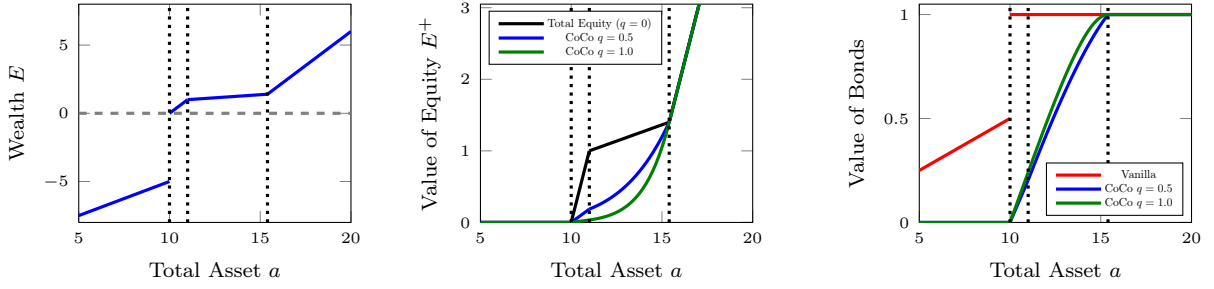
\begin{figure}[t]
\centering
\begin{subfigure}[t]{0.315\textwidth}
\centering
\begin{tikzpicture}
\begin{axis}[
    width=\textwidth,
    xlabel={Total Asset $a$},
    ylabel={Wealth $E$},
    xmin=5, xmax=20,
    ymin=-8, ymax=8,
    label style={font=\scriptsize},
    ticklabel style={font=\tiny},
    ]
\addplot [domain=5:10,solid,color=blue,very thick]{x/2-10};
\addplot [domain=10:11,solid,color=blue,very thick]{x-10};
\addplot [domain=11:15.4,solid,color=blue,very thick]{x/11};
\addplot [domain=15.4:20,solid,color=blue,very thick]{x-14};
\addplot [color=black,dotted,very thick] coordinates {
    (10,-8)
    (10,8)};
\addplot [color=black,dotted,very thick] coordinates {
    (11,-8)
    (11,8)};
\addplot [color=black,dotted,very thick] coordinates {
    (15.4,-8)
    (15.4,8)};
\addplot [color=gray,dashed,very thick] coordinates {
    (5,0)
    (20,0)};
\end{axis}
\end{tikzpicture}
\caption{The bank's equity $E$ as a function of the value of the total assets $a$.}
\label{fig:1bank-wealth}
\end{subfigure}
~
\begin{subfigure}[t]{0.315\textwidth}
\begin{tikzpicture}[declare function={ lam(\x) = (1.4-\x/11)/0.4; c(\l) = (1-(1-2/7*\l)^5); c1(\l) = (1-(1-2/7*\l)^10);},]
\begin{axis}[
    width=\textwidth,
    xlabel={Total Asset $a$},
    ylabel={Value of Equity $E^+$},
    xmin=5, xmax=20,
    ymin=0, ymax=3.05,
    label style={font=\scriptsize},
    ticklabel style={font=\tiny},
    legend pos=north west,
    legend style={nodes={scale=0.4, transform shape}},
    ]
\addplot [domain=5:10,solid,color=black,very thick]{0};
\addplot [domain=10:11,solid,color=black,very thick,forget plot]{x-10};
\addplot [domain=11:15.4,solid,color=black,very thick,forget plot]{x/11};
\addplot [domain=15.4:20,solid,color=black,very thick,forget plot]{x-14};
\addlegendentry{Total Equity ($q = 0$)}
\addplot [domain=5:10,solid,color=blue,very thick]{0};
\addplot [domain=10:11,solid,color=blue,very thick,forget plot]{(1-c(1))*(x-10)};
\addplot [domain=11:15.4,solid,color=blue,very thick,forget plot]{(1-c(lam(x)))*x/11};
\addplot [domain=15.4:20,solid,color=blue,very thick,forget plot]{x-14};
\addlegendentry{CoCo $q = 0.5$}
\addplot [domain=5:10,solid,color=black!50!green,very thick]{0};
\addplot [domain=10:11,solid,color=black!50!green,very thick,forget plot]{(1-c1(1))*(x-10)};
\addplot [domain=11:15.4,solid,color=black!50!green,very thick,forget plot]{(1-c1(lam(x)))*x/11};
\addplot [domain=15.4:20,solid,color=black!50!green,very thick,forget plot]{x-14};
\addlegendentry{CoCo $q = 1.0$}
\addplot [color=black,dotted,very thick] coordinates {
    (10,0)
    (10,3.05)};
\addplot [color=black,dotted,very thick] coordinates {
    (11,0)
    (11,3.05)};
\addplot [color=black,dotted,very thick] coordinates {
    (15.4,0)
    (15.4,3.05)};
\end{axis}
\end{tikzpicture}
\caption{The value of the bank's equity $E$ owned by the original shareholders as a function of the value of the total assets $a$.}
\label{fig:1bank-equity}
\end{subfigure}
~
\begin{subfigure}[t]{0.315\textwidth}
\centering
\begin{tikzpicture}[declare function={ lam(\x) = (1.4-\x/11)/0.4; c(\l) = (1-(1-2/7*\l)^5); c1(\l) = (1-(1-2/7*\l)^10);},]
\begin{axis}[
    width=\textwidth,
    xlabel={Total Asset $a$},
    ylabel={Value of Bonds},
    xmin=5, xmax=20,
    ymin=0, ymax=1.05,
    label style={font=\scriptsize},
    ticklabel style={font=\tiny},
    legend pos=south east,
    legend style={nodes={scale=0.4, transform shape}},
    ]
\addplot [domain=5:10,solid,color=red,very thick]{x/2/10};
\addplot [domain=10:20,solid,color=red,very thick,forget plot]{10/10};
\addlegendentry{Vanilla}
\addplot [domain=5:10,solid,color=blue,very thick]{0};
\addplot [domain=10:11,solid,color=blue,very thick,forget plot]{c(1)*(x-10)/4};
\addplot [domain=11:15.4,solid,color=blue,very thick,forget plot]{((1-lam(x))*4 + c(lam(x))*x/11)/4};
\addplot [domain=15.4:20,solid,color=blue,very thick,forget plot]{4/4};
\addlegendentry{CoCo $q = 0.5$}
\addplot [domain=5:10,solid,color=black!50!green,very thick]{0};
\addplot [domain=10:11,solid,color=black!50!green,very thick,forget plot]{c1(1)*(x-10)/4};
\addplot [domain=11:15.4,solid,color=black!50!green,very thick,forget plot]{((1-lam(x))*4 + c1(lam(x))*x/11)/4};
\addplot [domain=15.4:20,solid,color=black!50!green,very thick,forget plot]{4/4};
\addlegendentry{CoCo $q = 1.0$}
\addplot [color=black,dotted,very thick] coordinates {
    (10,0)
    (10,1.05)};
\addplot [color=black,dotted,very thick] coordinates {
    (11,0)
    (11,1.05)};
\addplot [color=black,dotted,very thick] coordinates {
    (15.4,0)
    (15.4,1.05)};
\end{axis}
\end{tikzpicture}
\caption{The payoff as a fraction of the face value, for different bonds, as a function of the value of the total assets $a$.}
\label{fig:1bank-debttype}
\end{subfigure}
\caption{Example~\ref{ex:setting-coco}: Illustrations of the value of CoCo bonds.}
\label{fig:1bank}
\end{figure}

For this bank, the conversion and default break points for the total asset variable are $a_1=10, a_2=11, a_3=15.4$ and the equity $E=E(a)$ is given explicitly by
\[E(a) = \begin{cases} a- 14 &\text{if } a \geq 15.4 \\ \frac{a}{11} &\text{if } a \in [11,15.4) \\ a - 10 &\text{if } a \in [10,11) \\ \frac{a}{2} - 10 &\text{if } a \in [0,10). \end{cases}\]
This outcome is plotted in Figure~\ref{fig:1bank-wealth}: It shows that in the region of CoCo conversions $a \in [11,15.4)$ the equity $E$ is made more stable as the external asset value declines.  
In contrast to the bank shareholders, who benefit uniformly from the CoCo bonds, holders of bank debt find that CoCo bonds underperform the payoff of the vanilla debt.  The value of the equity for the original shareholders is displayed in Figure~\ref{fig:1bank-equity} for three conversion factors $q = 0, 0.5, 1$, i.e., in which \$1 of CoCo debt is converted to $q$ of new equity at conversion.  If $q = 0$, the original shareholders retain the full equity of the bank; the value retained by the original shareholders drops as $q$ increases and more of the bank's equity is held by CoCo bond owners.  The CoCo payout, given as  the sum of remaining debt $(1-\lambda(a))\bar p^c$ and value of the equity $c(\lambda(a))E(a)^+$, is shown as a fraction of the face value in Figure~\ref{fig:1bank-debttype} for $q = 0.5,1$.  We see that the higher the conversion factor $q$ the more value an investor will recover from CoCo bonds; it is also true that the value of CoCo bonds will always be dominated by the payout of the vanilla debt.

\end{example}

\begin{remark}\label{rem:optionality}
\begin{enumerate}
  \item The example illustrates that although CoCo bonds appear to underperform vanilla debt whenever $a<15.4$, their existence stabilizes the health of the bank. A similar bank, financed with pure vanilla debt with face value $\bar p^0=14$, would default at $a = 14$, which demonstrates the value of CoCo bonds in a crisis scenario.

  \item Although we assume a mechanical rule for conversion of CoCo bonds from debt to equity as was done in, e.g.,~\cite{glasserman2012coco}, CoCo securities are often callable, i.e., have an optional structure in which the issuer has a right to convert the debt at the trigger level, but not the obligation to do so.  Provided that banks are equity maximizers and the conversion rate $q$ is at most 1, each institution will choose to exercise the CoCo option structure using the mechanical rules set out in this work at maturity of the claims.
\end{enumerate}
\end{remark}

\section{Contagion with Contingent Convertible Bonds}\label{sec:nbank}

The remainder of this work will focus on a network of $n$ financial institutions labelled by $i\in\{1,2,\dots, n\}$ whose initial vanilla and fractional CoCo debt obligations have face values $\bar p^0_i\ge 0$ and $\bar p^c_i\ge 0$ as in Section \ref{sec:1bank}.  Without loss of generality, we will assume $\bar p^0_i + \bar p^c_i > 0$ for all banks $i$. The CoCo parameters $\tau_i,q_i$ may vary across banks. These banks hold external assets and cross-holdings of interbank debt and equity. Like network models with vanilla debt and equity cross-holdings such as~\cite{EN01,RV13,gourieroux2012}, we seek to determine the vector of firm assets $A$ after network clearing at maturity $T = T_K$.  As in these papers, we assume the following stylized rules for clearing:\begin{enumerate}
\item \emph{Limited liabilities}: the total payment made by any firm will never exceed the total assets available to the bank.
\item \emph{Priority of debt claims}: a firm with equity $E_i<0$  cannot pay its debts in full and hence will default, in which case the shareholders receive no value.
\item \emph{All debts are of the same seniority}: in case a bank has $E_i<0$ and defaults, debts are paid out in proportion to the size of the nominal claims.
\end{enumerate}
Additionally, as in the prior section, a defaulted bank  will realize only a fixed fraction  $\alpha \in [0,1]$  of its total assets. As is common in the systemic risk literature, we also add a ``fictitious'' bank, labelled by $i=0$, to represent the external holders of bank debt and equity.

Our aim here is to determine the relationships between the balance sheets of all banks $i \in \{1,2,...,n\}$ just after clearing and CoCo conversion takes place at maturity $T=T_K$. We suppose the banks have the vector of external assets $X=(x_i)\in\bbr^n$, equities $E=(E_i)$ and conversion factors $\lambda\in[0,1]^n$.  The liabilities of bank $i$  consist of:
\begin{itemize}
    \item \emph{Vanilla  liabilities}: $L_{i0}^0:=\pi^0_{i0}\bar p^0_i \geq 0$ is owed from bank $i$ to  entities outside the financial system and $L_{ij}^0:=\pi^0_{ij}\bar p^0_i \ge 0$ is owed from bank $i$ to any other bank $j$;
    \item \emph{CoCo  liabilities}: $(1-\lambda_i)L_{i0}^c$  is owed from bank $i$ to entities outside the financial system where $L_{i0}^c:=\pi^c_{i0}\bar p^c_i \geq 0$  is the  initial face value, and $(1-\lambda_i)L_{ij}^c$  is owed from bank $i$ to any other bank $j$ where $L_{ij}^c:=\pi^c_{ij}\bar p^c_i \ge 0$.
      \item \emph{Total  liabilities}: $\bar p_i(\lambda_i):=\bar p^0_i+(1-\lambda_i)\bar p^c_i$ is the total amount owed by bank $i$.
      \end{itemize}
    Its nominal assets with equity cross-holdings are denoted by: 
\begin{itemize}
    \item \emph{External assets}: $x_i \geq 0$ is held in assets external to the financial network;
    \item \emph{Vanilla interbank debt assets}: $\sum_{j = 1}^n L_{ji}^0$ where bank $j$ owes $L_{ji}^0 \geq 0$ to bank $i$;

    \item \emph{CoCo interbank assets}: $\sum_{j = 1}^n (1 - \lambda_j) L_{ji}^c$ of remaining CoCo debt;
    \item \emph{Interbank equity assets}: bank $i$ holds a fraction $\pi_{ji}^e \in [0,1)$ of the original equity shares of bank $j$ plus the fraction $\pi^c_{ji}$ of the additional equity from the conversion of CoCo debt $\bar p^c_j$. 
\end{itemize}
We assume that $L_{ii}^0 =L_{ii}^c = 0$ to eliminate self-dealing and $\pi_{ii}^e = 0$ to eliminate double counting of a firm's equity.  Furthermore, note that $\sum_{j = 0}^n \pi_{ij}^e = 1$ since all equity is owned by some entity,  and the equity of bank $i$ is $E_i=A_i-\bar p_i(\lambda_i)$, its total realized assets minus its total liabilities. 

In the network setting,  the total asset value $A_i$ of a given bank $i$ is dependent on the values of securities written on the remaining banks $j\ne i$, which are themselves functions of their total asset values $A_j$ given in Section 2. Following the balance sheet construction for the valuation of assets, the total assets of bank $i$ are given by
 \begin{align}
\label{clearing_i} A_i &= x_i + \sum_{j=1}^n F_{ji}(A_j)\\
\label{clearing_F} \begin{split} F_{ji}(A_j) &:= \pi^0_{ji}[\bar p^0_j-E_j(A_j)^-]+\pi^c_{ji}[(1-\lambda_j(A_j))\bar p^c_j+ c_j(A_j) E_j(A_j)^+]\\
    &\qquad +\pi^e_{ji}(1-c_j(A_j)) E_j(A_j)^+\end{split}\\
\nonumber E_j(A_j) &:= \ind{A_j \geq a_{1,j}}A_j + \ind{A_j < a_{1,j}} \alpha A_j - [\bar p_j^0 + (1-\lambda_j(A_j))\bar p_j^c]\\
\nonumber \lambda_j(A_j) &:= 1 \wedge \left(\frac{\bar p_j^0 + \bar p_j^c - A_j/(1+\tau_j)}{\bar p_j^c}\right)^+\\
\nonumber c_j(A_j) &:= 1 - \left(\frac{(A_j \vee a_{2,j}) \wedge a_{3,j}}{a_{3,j}}\right)^{\frac{q_j}{\tau_j}}
\end{align}
with threshold asset levels: $a_{1,j} := \bar p_j^0;~a_{2,j} := (1+\tau_j)\bar p_j^0;~a_{3,j} := (1+\tau_j)(\bar p_j^0+\bar p_j^c)$.

\subsection{Existence of Clearing Wealths}\label{sec:nbank-main}
The network clearing problem to find solutions of \eqref{clearing_i} can now be characterized as finding solutions $A^*$ to the vector-valued fixed point equation:
\begin{align}
\label{eq_clearing} A^* &= \Phi(A^*,X)\ , \\
\nonumber  \Phi_i(A,X)&:= x_i + \sum_{j=1}^n F_{ji}(A_j).
\end{align}
This clearing condition can be compared with the clearing mechanism in, e.g., \cite{BF18comonotonic} for a vanilla interbank market with equity cross-holdings but without contingent convertible obligations. Just as in their setting, Tarski's fixed point theorem now implies that there exists a maximal and minimal clearing solution to this clearing problem. 

\begin{theorem}\label{thm:exist}
For any vector of external assets $X\in\bbr_+^n$, there exist maximal and minimal clearing total asset vector solutions $A^*_+(X),A^*_-(X)$ of  $A = \Phi(A,X)$ in 
\begin{equation}\label{eq:domain}
D := [\vec{0} \; , \; (I-\Pi^{e\T})^{-1}\left([X+\Pi^{0\T}\bar p^0+\Pi^{c\T}\bar p^c - \Pi^{e\T}[\bar p^0+\bar p^c]] \vee a_3\right)] \subseteq \bbr_+^n.
\end{equation}
\end{theorem}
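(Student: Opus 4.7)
The plan is to apply Tarski's fixed point theorem to $\Phi(\cdot,X)$ viewed as a self-map on the complete lattice $D$. Three things must be verified: that $D$ is a complete lattice, that $\Phi(\cdot,X)$ is order-preserving, and that $\Phi(\cdot,X)$ maps $D$ into itself. The first is automatic since $D$ is a compact product interval in $\bbr_+^n$, provided the upper corner $\bar A := (I-\Pi^{e\T})^{-1}\bigl([X+\Pi^{0\T}\bar p^0+\Pi^{c\T}\bar p^c - \Pi^{e\T}(\bar p^0+\bar p^c)] \vee a_3\bigr)$ is well-defined and non-negative. This follows from the standard assumption that $\Pi^e$ has spectral radius strictly less than $1$ (a consequence of the fictitious node absorbing a positive fraction of every bank's equity), which gives the convergent Neumann series $(I-\Pi^{e\T})^{-1}=\sum_{k\ge 0}(\Pi^{e\T})^k$ with non-negative entries, whence $\bar A \ge a_3 \ge \vec{0}$ componentwise.

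For monotonicity, since $\Phi_i(A,X)$ depends on $A$ only through the sum $\sum_j F_{ji}(A_j)$ in which each $F_{ji}$ is a single-variable function, it suffices to show every $F_{ji}$ is non-decreasing. The three bracketed quantities in the definition~\eqref{clearing_F} are exactly the three stakeholder valuations of Theorem~\ref{nondecreasing}: the identity $\bar p^0_j - E_j(A_j)^- = \ind{A_j \ge a_{1,j}}\bar p^0_j + \ind{A_j < a_{1,j}}\alpha A_j$ (by direct case analysis on the sign of $E_j$) matches the vanilla debt value, while $(1-\lambda_j(A_j))\bar p^c_j + c_j(A_j)E_j(A_j)^+$ and $(1-c_j(A_j))E_j(A_j)^+$ are literally the CoCo and shareholder values. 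Under the non-speculative condition $q_j \le 1$, all three are non-decreasing in $A_j$ by Theorem~\ref{nondecreasing}, so $F_{ji}$ is a non-negative combination of monotone functions and hence monotone.

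The main step is $\Phi(D,X)\subseteq D$. Non-negativity $\Phi_i \ge x_i \ge 0$ is immediate. For the upper bound, I exploit monotonicity to reduce to verifying $\Phi_i(\bar A, X) \le \bar A_i$. The critical design feature of $D$ is that $\bar A_j \ge a_{3,j}$ for every $j$, placing bank $j$ in the fully solvent, non-converted regime at $A_j = \bar A_j$, so $\lambda_j(\bar A_j)=0$, $c_j(\bar A_j)=0$, $E_j^- = 0$, and $E_j^+ = \bar A_j - \bar p^0_j - \bar p^c_j$. Then $F_{ji}$ collapses to the linear expression
\begin{equation*}
F_{ji}(\bar A_j) = \pi^0_{ji}\bar p^0_j + \pi^c_{ji}\bar p^c_j + \pi^e_{ji}(\bar A_j-\bar p^0_j-\bar p^c_j).
\end{equation*}
Summing over $j$ and rearranging, the desired inequality becomes $(I-\Pi^{e\T})\bar A \ge X + \Pi^{0\T}\bar p^0 + \Pi^{c\T}\bar p^c - \Pi^{e\T}(\bar p^0+\bar p^c)$, which is exactly what the definition of $\bar A$ (with the componentwise $\vee a_3$ only strengthening the bound) guarantees. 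Tarski's theorem then delivers the maximal and minimal fixed points $A^*_\pm(X) \in D$. The subtle part is choosing the domain correctly so that the piecewise-nonlinear $F_{ji}$ becomes linear at the upper corner; once the $\vee a_3$ term secures $\bar A_j \ge a_{3,j}$, monotonicity from step two plus this linearization closes the argument routinely.
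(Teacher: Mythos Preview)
Your proposal is correct and follows essentially the same route as the paper: Tarski's fixed point theorem, componentwise monotonicity of each $F_{ji}$, and verification that $\Phi$ maps $D$ into itself by linearizing at the top corner $\bar A\ge a_3$. The only cosmetic difference is that you invoke Theorem~\ref{nondecreasing} for the monotonicity of the three stakeholder valuations, whereas the paper's proof re-derives those piecewise monotonicity checks directly; your added remark on the Neumann series for $(I-\Pi^{e\T})^{-1}$ is a detail the paper leaves implicit.
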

\begin{proof} 
We will prove the existence of maximal and minimal clearing asset vectors through an application of Tarski's fixed point theorem.  Due to the construction of $\Phi$, this requires the monotonicity of $F_{ji}$ for every pair of banks $i,j$ and demonstrating that $\Phi(A,X) \in D$ for any $A \in D$.

First, consider the monotonicity of $F_{ji}$. We will prove this result by demonstrating separately that $\bar p_j^0-E_j(\cdot)^-$, $(1-\lambda_j(\cdot))\bar p_j^c + c_j(\cdot)E_j(\cdot)^+$, and $(1-c_j(\cdot))E_j(\cdot)^+$ are nondecreasing.
\begin{itemize}
\item Consider $\bar p_j^0-E_j(\cdot)^-$.  This is nondecreasing if $E_j(\cdot)$ is nondecreasing.  By expanding $\lambda_j$, $E_j$ is equivalently written as:
\[E_j(A_j) = \begin{cases} A_j - [\bar p_j^0+\bar p_j^c] &\text{if } A_j \geq a_{3,j} \\ \frac{\tau_j}{1+\tau_j}A_j &\text{if } A_j \in [a_{2,j},a_{3,j}) \\ A_j - \bar p_j^0 &\text{if } A_j \in [a_{1,j},a_{2,j}) \\ \alpha A_j - \bar p_j^0 &\text{if } A_j < a_{1,j}. \end{cases}\]  
By this formulation, $E_j$ is trivially nondecreasing (noting continuity at $a_{2,j},a_{3,j}$ and $\alpha \leq 1$).
\item Consider $(1-\lambda_j(\cdot))\bar p_j^c + c_j(\cdot)E_j(\cdot)^+$. Expanding all of these terms recovers: 
\[\begin{cases} \bar p_j^c &\text{if } A_j \geq a_{3,j} \\ \frac{1}{1+\tau_j}\left[(1+\tau_j)-(q_j+\tau_j)\left(\frac{A_j}{a_{3,j}}\right)^{\frac{q_j}{\tau_j}}\right] &\text{if } A_j \in [a_{2,j},a_{3,j}) \\ c_j(a_{2,j})[A_j - \bar p_j^0] &\text{if } A_j \in [a_{1,j},a_{2,j}) \\ 0 &\text{if } A_j < a_{1,j}. \end{cases} \]
By this formulation, monotonicity in $A_j$ is trivial (noting continuity throughout).
\item Consider $(1-c_j(\cdot))E_j(\cdot)^+$. First, as noted in the prior cases, $E_j$ is nondecreasing.  Furthermore, $1-c_j(A_j) = [((A_j\vee a_{2,j})\wedge a_{3,j})/a_{3,j}]^{q_j/\tau_j}$ is also trivially nondecreasing. 
\end{itemize}

Now we need to demonstrate that the lattice $D$ provides a consistent pre-image and image space for $\Phi(\cdot,X)$.  First, trivially, $\Phi(\vec{0},X) \geq \vec{0}$ by use of the Leontief inverse and $a_3 \in \bbr_+^n$.  Consider now $\Phi(\bigvee D,X)$. By construction $\bigvee D \geq a_3$ and, thus, $\lambda(\bigvee D) = \vec{0}$.  Using this value for $\lambda$, we find
\begin{align*}
\Phi&(\bigvee D,X)-\bigvee D = [X + \Pi^{0\T}\bar p^0 + \Pi^{c\T}\bar p^c - \Pi^{e\T}[\bar p^0+\bar p^c]] -(I-\Pi^{e\T})\bigvee D\\
&= [X + \Pi^{0\T}\bar p^0 + \Pi^{c\T}\bar p^c - \Pi^{e\T}[\bar p^0+\bar p^c]] - \left([X + \Pi^{0\T}\bar p^0 + \Pi^{c\T}\bar p^c - \Pi^{e\T}[\bar p^0+\bar p^c]] \vee a_3\right) \leq \vec{0}.
\end{align*}
Finally, by monotonicity of $\Phi(\cdot,X)$, it follows that $\Phi(A,X) \in D$ for any $A \in D$.
\end{proof}

\begin{remark}\label{rem:marketvaluatioin}
The maximal clearing vector guaranteed by Theorem \ref{thm:exist} determines fair market prices for all balance sheet entries immediately after clearing at $T_K$.  Under the assumptions of risk neutral (no-arbitrage) pricing theory, these fair market prices in turn determine market valuation of the banks' balance sheets at earlier times $t<T_K$. Thus the maximal clearing vector determines all market prices in a dynamic network pricing model. While exploration of such a dynamical model would be of great interest, the details are beyond the scope of the present paper. 
\end{remark}

\begin{remark}\label{rem:computation}
The maximal clearing solution $A^*_+(X) = \Phi(A^*_+(X),X)$ can be found via Picard iterations of $\Phi(\cdot,X)$ beginning at $A^{(0)} = \bigvee D$. 
By upper semicontinuity, these fixed point iterations will converge to the maximal clearing asset vector.
\end{remark}

\begin{remark}\label{rem:KV}
The multiple maturity interbank network setting of \cite{KV16}, and the CoCo systems with all-or-nothing conversion presented in~\cite{balter2020contingent} require additional monotonicity conditions to ensure existence of clearing solutions.  In our proof, the simple non-speculative condition $q\le 1$ is all that is needed to construct the monotone fixed point problem $A=\Phi(A,X)$, thereby guaranteeing the existence of maximal and minimal clearing solutions. 
\end{remark}

\begin{corollary}\label{cor:unique}
Consider the setting of Theorem~\ref{thm:exist} with no bankruptcy costs, i.e., $\alpha = 1$.  
If $L_{i0}^0 > 0$ and $\pi_{i0}^e > 0$ for every bank $i$ then there exists a \emph{unique} clearing wealth vector $A^* = \Phi(A^*,X)$.
\end{corollary}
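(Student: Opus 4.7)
The plan is to exploit the conservation-of-value property available when $\alpha = 1$ to reduce uniqueness to strict monotonicity of the external-flow maps $F_{j0}$ for $j = 1, \dots, n$. First, in the absence of bankruptcy costs the balance-sheet identity $A_j = \bar p_j(\lambda_j(A_j)) + E_j(A_j)$ from \eqref{E}--\eqref{lambda} extends to the pointwise identity
\[
[\bar p_j^0 - E_j(A_j)^-] + [(1-\lambda_j(A_j))\bar p_j^c + c_j(A_j) E_j(A_j)^+] + (1-c_j(A_j)) E_j(A_j)^+ = A_j,
\]
valid for every $A_j \geq 0$ (the default region $A_j < a_{1,j}$ is verified directly using $\alpha = 1$). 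Combined with the row-stochasticity $\sum_{i=0}^n \pi^0_{ji} = \sum_{i=0}^n \pi^c_{ji} = \sum_{i=0}^n \pi^e_{ji} = 1$ of the holding fractions, this yields $\sum_{i=0}^n F_{ji}(A_j) = A_j$. Summing the clearing equation $A_i = x_i + \sum_{j=1}^n F_{ji}(A_j)$ over interbank indices $i = 1, \dots, n$ and subtracting the conservation identity summed over $j$ produces the \emph{sum constraint} $\sum_{j=1}^n F_{j0}(A_j) = \sum_{i=1}^n x_i$, satisfied by every clearing vector.

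Applying this constraint to the maximal and minimal clearing vectors $A^*_+ \geq A^*_-$ supplied by Theorem~\ref{thm:exist}, the componentwise monotonicity of $F_{j0}$ (already established within the proof of that theorem) gives $F_{j0}((A^*_+)_j) \geq F_{j0}((A^*_-)_j)$ for every $j$, while the two sums coincide. Hence $F_{j0}((A^*_+)_j) = F_{j0}((A^*_-)_j)$ for each $j$. It therefore suffices to show that each $F_{j0}$ is \emph{strictly} increasing on $[0,\infty)$, for this forces $A^*_+ = A^*_-$, i.e.\ uniqueness.

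Verifying strict monotonicity is the principal obstacle and the only place where the hypotheses $L_{j0}^0 > 0$ and $\pi_{j0}^e > 0$ come into play. With $\alpha = 1$ the function $F_{j0}$ is continuous and piecewise smooth on the four regions $[0, a_{1,j})$, $[a_{1,j}, a_{2,j}]$, $(a_{2,j}, a_{3,j})$, and $[a_{3,j}, \infty)$, and I would compute its slope on each. On the default region $F_{j0}(A_j) = \pi^0_{j0} A_j$ has slope $\pi^0_{j0} > 0$, positive because $L_{j0}^0 = \pi^0_{j0}\bar p_j^0 > 0$; on $[a_{3,j}, \infty)$ only the equity term varies and the slope is simply $\pi^e_{j0} > 0$. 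The two intermediate regions are the genuinely delicate piece: reusing the derivative computation in the proof of Theorem~\ref{nondecreasing}, the slope on $[a_{1,j}, a_{2,j}]$ is the convex combination $\pi^c_{j0}\, c_{*,j} + \pi^e_{j0}(1 - c_{*,j})$ with $c_{*,j} := c_j(a_{2,j}) \in [0,1)$, while the slope on $(a_{2,j}, a_{3,j})$ equals $\pi^c_{j0}(1 - y) + \pi^e_{j0}\, y$ with $y := \frac{q_j + \tau_j}{1+\tau_j}(A_j/a_{3,j})^{q_j/\tau_j} \in (0,1]$. In both cases the coefficient of $\pi^e_{j0}$ is strictly positive, so the assumption $\pi^e_{j0} > 0$ makes the slope positive throughout. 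Continuity of $F_{j0}$ at the breakpoints then upgrades these slope facts to strict monotonicity on all of $[0,\infty)$, and the reduction in the previous paragraph yields $A^*_+ = A^*_-$, completing the proof.
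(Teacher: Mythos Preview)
Your argument is correct and follows essentially the same route as the paper's proof: both hinge on the conservation identity $\sum_{j=1}^n F_{j0}(A_j)=\sum_{i=1}^n x_i$ (the paper phrases this as constancy of the societal node's wealth $A_0$) together with strict monotonicity of the external-flow map $F_{j0}$ under the hypotheses $L_{j0}^0>0$ and $\pi^e_{j0}>0$. Your derivation of the conservation law via the pointwise balance-sheet identity and your piecewise slope computation for $F_{j0}$ are somewhat more explicit than the paper's treatment, but the underlying idea is the same.
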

\begin{proof}
Throughout this proof let $A_+^*,A_-^*$ be the maximal and minimal clearing wealths as determined by Theorem~\ref{thm:exist} with the parameter $X\in\bbr^n_+$ implicit.  Additionally let $\lambda_+^* = \lambda(A_+^*)$ and $\lambda_-^* = \lambda(A_-^*)$ be fixed; by monotonicity note that $\lambda_+^* \leq \lambda_-^*$ component-wise.

First, we will demonstrate that the societal wealth is the same under any clearing solution, i.e., $A_{+,0}^* = A_{-,0}^*$.
Let $A = \Phi(A,X)$ be an arbitrary clearing solution with associated fractional conversions $\lambda := \lambda(A)$.  Without loss of generality, we assume the societal node has $x_0 = 0$ and has no debt or equity obligations, so that 
\[A_0:=\sum_{j = 1}^n \left(\pi^0_{j0}(\bar p^0_j-E_j(A_j)^-)+(1-\lambda_j)\pi^c_{j0}\bar p^c_j + c_j(A_j)\pi^c_{j0}E_j(A_j)^+ + (1-c_j(A_j))\pi^e_{j0}E_j(A_j)^+\right)\ge 0.\]  
Recall that $\alpha=1$. The total equity in the system is:
\begin{align*}
\sum_{i = 0}^n E_i(A_i)^+ &= \sum_{i = 0}^n (E_i(A_i)+E_i(A_i)^-)\\
    &= \sum_{i = 1}^n \left(x_i  - \bar p_i(\lambda_i)+E_i(A_i)^-\right)\\
    &\quad + \sum_{j = 1}^n\sum_{i = 0}^n\left(\begin{array}{l}\pi^0_{ji}(\bar p^0_j-E_j(A_j)^-)+(1-\lambda_j)\pi^c_{ji}\bar p^c_j+c_j(A_j)\pi^c_{ji}E_j(A_j)^+\\ + (1-c_j(A_j))\pi^e_{ji}E_j(A_j)^+\end{array}\right)\\
    &= \sum_{i = 1}^n x_i + \sum_{i = 1}^n E_i(A_i)^+.
\end{align*}
The third equality follows since $\sum_{i = 0}^n \pi_{ji}^0 =\sum_{i = 0}^n \pi_{ji}^c= \sum_{i = 0}^n \pi_{ji}^e = 1$, and implies $A_0 = \sum_{i = 1}^n x_i$.  
Therefore, it must follow that the maximal and minimal clearing solutions coincide for the societal node (i.e., $A_{+,0}^* = A_{-,0}^*$).

Second, using the constant wealth of the societal node, we will prove by contradiction the uniqueness of the entire clearing wealth vector.  Assume there exists some bank $i$ such that $A_{+,i}^* > A_{-,i}^*$. Then
\begin{align*}
A_{+,0}^* &= \sum_{j = 1}^n \left(\begin{array}{l}\left[L_{j0}^0 + (1-\lambda_{+,j}^*)L_{j0}^c\right]\\ + \left[\ind{E_j(A_{+,j}^*) \geq 0}\left(c_j(A_{+,j}^*)\pi^c_{j0}+(1-c_j(A_{+,j}^*))\pi^e_{j0}\right) + \ind{E_j(A_{+,j}^*) < 0}\pi_{j0}^0\right] E_j(A_{+,j}^*)\end{array}\right)\\
    &> \sum_{j = 1}^n \left(\begin{array}{l}\left[L_{j0}^0 + (1-\lambda_{-,j}^*)L_{j0}^c\right]\\ + \left[\ind{E_j(A_{-,j}^*) \geq 0}\left(c_j(A_{-,j}^*)\pi_{j0}^c+(1-c_j(A_{-,j}^*))\pi_{j0}^e\right) + \ind{E_j(A_{-,j}^*) < 0}\pi_{j0}^0\right] E_j(A_{-,j}^*)\end{array}\right)\\ 
    &= A_{-,0}^*
\end{align*}
where the strict inequality follows from the same arguments as the proof of Theorem~\ref{thm:exist} (and the extra conditions guarantee strict monotonicity).  This contradiction completes our proof. 
\end{proof}

The next example illustrates that CoCo systems are essentially different from  ``vanilla'' Eisenberg-Noe clearing problems such as~\cite{gourieroux2012} which always exhibit piecewise linear relationships: Clearing wealths of CoCo systems are {\em piecewise non-linear} in the external asset vector $x$.  

\begin{example}\label{ex:symmetric}
An $n$ bank system will be called ``symmetric'' if: (i) $q,\tau,\alpha$ are the same for all banks; (ii) the total interbank liabilities in vanilla debt, CoCo debt and equity are the same for all banks; (iii) the external vanilla and CoCo liabilities are the same for all banks; (iv) the external assets, interbank CoCo assets, interbank vanilla assets, and interbank equity assets are the same for all banks. 

Let $\beta_0,\beta\in(0,1)$ be the fraction of total external and interbank debt that are CoCo-ized for all banks in a symmetric system, and let $x,y,z\ge 0$ denote the total external assets, external debt and interbank debt for each bank. Then, prior to conversion, each bank has identical vanilla debt $\bar p^0_i=(1-\beta_0)y+(1-\beta)z$ and CoCo debt $\bar p^c_i=\beta_0y+\beta z$. In addition, for a symmetric system, the interbank holding matrices $\Pi^0,\Pi^c,\Pi^e$ must have constant row and column sums, that is $\Pi^0\vec{1}=(\Pi^0)^\top\vec{1}=\pi^0\vec{1}$, $\Pi^c\vec{1}=(\Pi^c)^\top\vec{1}=\pi^c\vec{1}$, and $\Pi^e\vec{1}=(\Pi^e)^\top\vec{1}=\pi^e\vec{1}$ for scalars $\pi^0=\frac{(1-\beta)z}{(1-\beta_0)y+(1-\beta)z}$, $\pi^c=\frac{\beta z}{\beta_0 y + \beta z}$, and $\pi^e\in [0,1)$.

Under such conditions, one can verify that if an asset vector is symmetric (i.e., $A=a\vec{1}$ for a scalar $a$) then $\Phi(A,x\vec{1})=\phi(a,x)\vec{1}$ for a scalar function $\phi(a,x)$. Moreover $\phi(a,x)=x+F(a)$ where $F$ is defined comparably to \eqref{clearing_F}, i.e.,
\begin{align*}
F(a) &= (1-\lambda(a)\beta)z+[c(a)\pi^c + (1-c(a))\pi^e]E(a)^+ -\pi^0 E(a)^-\\
E(a) &= \ind{a \geq a_1}a + \ind{a < a_1} \alpha a - [(1-\lambda(a)\beta_0)y+(1-\lambda(a)\beta)z]\\
\lambda(a) &= 1 \wedge \left(\frac{y+z - a/(1+\tau)}{\beta_0 y + \beta z}\right)^+\\
c(a) &= 1-\left(\frac{(a\vee a_2)\wedge a_3}{a_3}\right)^{q/\tau}\\
\end{align*} 
with intervals $[0,a_1),[a_1,a_2),[a_2,a_3),[a_3,\infty)$ as in Section~\ref{sec:1bank} for
\begin{align*}
a_1 &= (1-\beta_0)y+(1-\beta)z\\
a_2 &= (1+\tau)[(1-\beta_0)y+(1-\beta)z]\\
a_3 &= (1+\tau)[y+z].
\end{align*}
Thus the clearing wealth condition boils down to the scalar fixed point condition $a^*=\phi(a^*,x)$. Moreover, since the hyperinterval $D$ given by \eqref{eq:domain} is symmetric, the maximal and minimal clearing solutions, obtained by iterating the function starting from either the maximal or minimal points of $D$, must both be symmetric.

In fact, the intervals in total assets determined by the thresholds $a_1,a_2,a_3$ can be mapped into intervals in the external assets for the maximal clearing solution $a^*(x)$, i.e., $[0,x_1),[x_1,x_2),[x_2,x_3),[x_3,\infty)$ where 
\begin{align*}
x_1 &= (1-\beta_0)y, \\
x_2 &= (1-\beta_0)y + \tau\left(1-[c(a_2)\pi^c+(1-c(a_2))\pi^e]\right)[(1-\beta)z+(1-\beta_0)y],\\
x_3 &= y + \tau(1-\pi^e)(y+z)  
\end{align*}
On the most challenging interval, namely $x\in[x_2,x_3)$, one needs to solve the equation $G(a)-(1+\tau) (x + z - \pi^c(y+z)) = 0$ on $a\in[a_2,a_3)$ for 
\[G(a):= (1+\tau)(1-\pi^c)a + \tau(\pi^c-\pi^e)\left(\frac{a}{a_3}\right)^{\frac{q}{\tau}}a \ .\]
Since $G(a_2)-(1+\tau)(x+z-\pi^c(y+z))=(1+\tau)(x_2-x) \leq 0 < (1+\tau)(x_3-x)=G(a_3)-(1+\tau) (x+z-\pi^c(y+z))$ and $G'(a) = 1 - \pi^c + q(\pi^c-\pi^e)(1-c(a))+\tau(1-[c(a)\pi^c+(1-c(a))\pi^e]) > 0$ for every $a \in [a_2,a_3)$, there is a unique root $a^{\ast\ast}(x)\in[a_2,a_3)$ for any $x\in[x_2,x_3)$. For the  intervals $[0,x_1), [x_1,x_2)$, and $[x_3,\infty)$, $a^*(x)$ is explicitly computable, showing that for any external asset level $x\ge 0$ the maximal clearing vector of the symmetric system has the form $A_+=a_+\vec{1}$ where
\begin{equation*}
a_+ =a^*(x):= \begin{cases} \frac{(x+z-\pi^e[y+z]}{1 - \pi^e} &\text{if } x\ge x_3 \\
   a^{\ast\ast}(x) &\text{if } x_2\le x< x_3 \\
    \frac{x+(1-\beta)z - [c(a_2)\pi^c+(1-c(a_2))\pi^e][(1-\beta_0)y+(1-\beta)z]}{1 - [c(a_2)\pi^c+(1-c(a_2))\pi^e]} &\text{if } x_1\le x<x_2  \\
   \frac{[(1-\beta)z+(1-\beta_0)y]x}{(1-\alpha)(1-\beta)z + (1-\beta_0)y} &\text{if } 0\le x<x_1\end{cases}
\end{equation*}

Interestingly, one can also show that the maximal clearing vector is unique except when $x\in[x_1,x_0), x_0:=\alpha^{-1}[(1-\alpha)(1-\beta)z + (1-\beta_0)y]$. If $x\in[x_1,x_0),$  the minimal clearing vector has the form $A_-=a_-\vec{1}$ with
\begin{align*}
a_- &= \frac{[(1-\beta)z+(1-\beta_0)y]x}{(1-\alpha)(1-\beta)z + (1-\beta_0)y}\ .
\end{align*}

This symmetric system is numerically compared with Example~\ref{ex:setting-coco} in Figure~\ref{fig:symmetric} to demonstrate the impacts of the network effects, especially due to CoCo bonds, on the financial system.  Specifically, in this example system, we find that, so long as the (symmetric) banks are not in default, both the banks and the original external shareholders benefit from the network effects.  In contrast, the external bond holders are marginally harmed by the introduction of network effects.  Notably, also, the CoCo bonds are triggered at higher values of the external asset with network effects, but are not triggered in full until the external asset is at a lower value; further, default occurs at the same threshold value $x_1$ without regard to the introduction of the financial network.
\begin{figure}[t]
\centering
\begin{subfigure}[t]{0.315\textwidth}
\centering
\begin{tikzpicture}[declare function={t = 0.1; eta = 0.5; b0 = 4/14; b = 3/4; Y = 14; Z = 10; piE0 = 0.2; alpha = 1/2; 
    piC = b*Z/(b*Z+b0*Y);
    pbar(\l) = (1-b*\l)*Z + (1-b0*\l)*Y; 
    c(\l) = (1 - (pbar(\l)/pbar(0))^(eta/t)); piE(\l) = c(\l)*piC + (1-c(\l))*piE0;
    G(\v) = (1+t)*(1-piC)*\v+t*(piC-piE0)*(\v/(t*pbar(0)))^(eta/t)*\v;},]
\begin{axis}[
    width=\textwidth,
    xlabel={External Asset $x$},
    ylabel={Wealth $E(a_+)$},
    xmin=5, xmax=20,
    ymin=-8, ymax=8,
    label style={font=\scriptsize},
    ticklabel style={font=\tiny},
    legend pos=north west,
    legend style={nodes={scale=0.4, transform shape}},
    ]
\def\xA{(1-b0)*Y}
\def\xB{(1-b0)*Y + t*(1-piE(1))*pbar(1)}
\def\xC{Y + t*(1-piE(0))*pbar(0)}
\def\vA{t*pbar(1)}
\def\vB{t*pbar(0)}

\addplot [domain=\xC:20,solid,color=blue,very thick]{(x-Y)/(1-piE(0))};
\draw [domain=\vA:\vB,solid,color=blue,very thick,variable=\v] plot ({G(\v)/t-(Z-piC*pbar(0))}, {\v});
\addplot [domain=\xA:\xB,solid,color=blue,very thick,forget plot]{(x-(1-b0)*Y)/(1-piE(1))};
\addplot [domain=5:\xA,solid,color=blue,very thick,forget plot]{(pbar(1)*(alpha*x-pbar(1)+alpha*(1-b)*Z))/(pbar(1)-alpha*(1-b)*Z)};
\addlegendentry{With network}

\addplot [domain=5:10,dashed,color=red,very thick]{x/2-10};
\addplot [domain=10:11,dashed,color=red,very thick,forget plot]{x-10};
\addplot [domain=11:15.4,dashed,color=red,very thick,forget plot]{x/11};
\addplot [domain=15.4:20,dashed,color=red,very thick,forget plot]{x-14};
\addlegendentry{Without network}

\addplot [color=black,dotted,very thick] coordinates {
    (\xA,-8)
    (\xA,8)};
\addplot [color=black,dotted,very thick] coordinates {
    (\xB,-8)
    (\xB,8)};
\addplot [color=black,dotted,very thick] coordinates {
    (\xC,-8)
    (\xC,8)};
\addplot [color=gray,dashed,very thick] coordinates {
    (5,0)
    (20,0)};
\end{axis}
\end{tikzpicture}
\caption{The maximal symmetric clearing wealth $E(a_+)$ as a function of the value of the external assets $x$.}
\label{fig:symmetric-wealth}
\end{subfigure}
~
\begin{subfigure}[t]{0.315\textwidth}
\centering
\begin{tikzpicture}[declare function={t = 0.1; eta = 0.5; b0 = 4/14; b = 3/4; Y = 14; Z = 10; piE0 = 0.2; alpha = 1/2; 
    piC = b*Z/(b*Z+b0*Y);
    pbar(\l) = (1-b*\l)*Z + (1-b0*\l)*Y; 
    c(\l) = (1 - (pbar(\l)/pbar(0))^(eta/t)); piE(\l) = c(\l)*piC + (1-c(\l))*piE0;
    lam(\v) = (t*pbar(0)-\v)/(t*(pbar(0)-pbar(1)));
    G(\v) = (1+t)*(1-piC)*\v+t*(piC-piE0)*(\v/(t*pbar(0)))^(eta/t)*\v;
    lam0(\x) = (1.4-\x/11)/0.4; c0(\l) = (1-(1-2/7*\l)^5);},]
\begin{axis}[
    width=\textwidth,
    xlabel={External Asset $x$},
    ylabel={Value of Equity $E(a_+)^+$},
    xmin=5, xmax=20,
    ymin=0, ymax=3,
    label style={font=\scriptsize},
    ticklabel style={font=\tiny},
    legend pos=north west,
    legend style={nodes={scale=0.4, transform shape}},
    ]
\def\xA{(1-b0)*Y}
\def\xB{(1-b0)*Y + t*(1-piE(1))*pbar(1)}
\def\xC{Y + t*(1-piE(0))*pbar(0)}
\def\vA{t*pbar(1)}
\def\vB{t*pbar(0)}

\addplot [domain=\xC:20,solid,color=blue,very thick]{x-Y};
\draw [domain=\vA:\vB,solid,color=blue,very thick,variable=\v] plot ({G(\v)/t-(Z-piC*pbar(0))} , {G(\v)/t-(Z-piC*pbar(0)) - ((1-b0*lam(\v))*Y+c(lam(\v))*(1-piC)*\v)});
\draw [domain=0:\vA,solid,color=blue,very thick,variable=\v] plot ({(1-b0)*Y+(1-piE(1))*\v} , {(1-piE(1))*\v - c(1)*(1-piC)*\v});
\addplot [domain=5:\xA,solid,color=blue,very thick,forget plot]{0};
\addlegendentry{With network}

\addplot [domain=15.4:20,dashed,color=red,very thick]{x-14};
\addplot [domain=11:15.4,dashed,color=red,very thick,forget plot]{(1-c0(lam0(x)))*x/11};
\addplot [domain=10:11,dashed,color=red,very thick,forget plot]{(1-c0(1))*(x-10)};
\addplot [domain=5:10,dashed,color=red,very thick,forget plot]{0};
\addlegendentry{Without network}

\addplot [color=black,dotted,very thick] coordinates {
    (\xA,0)
    (\xA,6)};
\addplot [color=black,dotted,very thick] coordinates {
    (\xB,0)
    (\xB,6)};
\addplot [color=black,dotted,very thick] coordinates {
    (\xC,0)
    (\xC,6)};
\end{axis}
\end{tikzpicture}
\caption{The value of a single bank's equity owned by the original \emph{external} shareholders as a function of the value of the external assets $x$.}
\label{fig:symmetric-equity0}
\end{subfigure}
~
\begin{subfigure}[t]{0.315\textwidth}
\centering
\begin{tikzpicture}[declare function={t = 0.1; eta = 0.5; b0 = 4/14; b = 3/4; Y = 14; Z = 10; piE0 = 0.2; alpha = 1/2; 
    piC = b*Z/(b*Z+b0*Y);
    pbar(\l) = (1-b*\l)*Z + (1-b0*\l)*Y; 
    c(\l) = (1 - (pbar(\l)/pbar(0))^(eta/t)); piE(\l) = c(\l)*piC + (1-c(\l))*piE0;
    lam(\v) = (t*pbar(0)-\v)/(t*(pbar(0)-pbar(1)));
    G(\v) = (1+t)*(1-piC)*\v+t*(piC-piE0)*(\v/(t*pbar(0)))^(eta/t)*\v;
    lam0(\x) = (1.4-\x/11)/0.4; c0(\l) = (1-(1-2/7*\l)^5);},]
\begin{axis}[
    width=\textwidth,
    xlabel={External Asset $x$},
    ylabel={Value of Debt $D(a_+)$},
    xmin=5, xmax=20,
    ymin=0.7, ymax=1.02,
    label style={font=\scriptsize},
    ticklabel style={font=\tiny},
    legend pos=north west,
    legend style={nodes={scale=0.4, transform shape}},
    ]
\def\xA{(1-b0)*Y}
\def\xB{(1-b0)*Y + t*(1-piE(1))*pbar(1)}
\def\xC{Y + t*(1-piE(0))*pbar(0)}
\def\vA{t*pbar(1)}
\def\vB{t*pbar(0)}

\addplot [domain=\xC:20,solid,color=blue,very thick]{1};
\draw [domain=\vA:\vB,solid,color=blue,very thick,variable=\v] plot ({G(\v)/t-(Z-piC*pbar(0))} , {(1-b0*lam(\v))+c(lam(\v))*(1-piC)*\v/Y});
\draw [domain=0:\vA,solid,color=blue,very thick,variable=\v] plot ({(1-b0)*Y+(1-piE(1))*\v} , {(1-b0)+c(1)*(1-piC)*\v/Y});
\addplot [domain=5:\xA,solid,color=blue,very thick,forget plot]{alpha*(1-b0)*x/(pbar(1)-alpha*(1-b)*Z)};
\addlegendentry{With network}

\addplot [domain=5:10,dashed,color=red,very thick]{alpha*x/14};
\addplot [domain=10:11,dashed,color=red,very thick,forget plot]{10/14+4/14*c0(1)*(x-10)/4};
\addplot [domain=11:15.4,dashed,color=red,very thick,forget plot]{10/14+4/14*((1-lam0(x))*4 + c0(lam0(x))*x/11)/4};
\addplot [domain=15.4:20,dashed,color=red,very thick,forget plot]{1};
\addlegendentry{Without network}

\addplot [color=black,dotted,very thick] coordinates {
    (\xA,0)
    (\xA,1.05)};
\addplot [color=black,dotted,very thick] coordinates {
    (\xB,0)
    (\xB,1.05)};
\addplot [color=black,dotted,very thick] coordinates {
    (\xC,0)
    (\xC,1.05)};
\end{axis}
\end{tikzpicture}
\caption{The payoff to the external node as a fraction of the face value as a function of the value of the external assets $x$.}
\label{fig:symmetric-debt0}
\end{subfigure}
\caption{Example~\ref{ex:symmetric}: Illustrations of the impact of a symmetric interbank network.  
The CoCo parameters are as in Example~\ref{ex:setting-coco} with $q = \frac{1}{2}$ so that the red dashed curves describe the system without an interbank network.
The networked system is such that $y = 14$ with $\beta_0 = \frac{4}{14}$, $z = 10$ with $\beta = \frac{3}{4}$, and interbank equity holdings $\pi^e = \frac{1}{5}$.}
\label{fig:symmetric}
\end{figure}
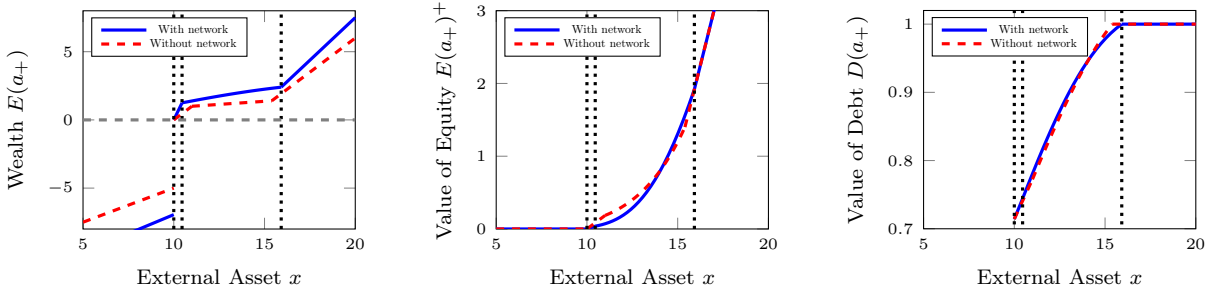

\end{example}

\section{Do CoCos Reduce Systemic Risk?}\label{sec:discussion}

The general interbank clearing problem of Section~\ref{sec:nbank} reduces to the model of equity cross-holdings and  vanilla debt   of \cite{RV13} when CoCo-ization is removed. It is of great interest to ask if systemic risk measures decrease as CoCo-ization is increased. This section will provide support for such a conclusion in certain ``typical systems'' when financial stability is measured by the number of defaulted banks, while also providing a simple counterexample to show that it is not universally true. 

First we note a very simple setting in which CoCo bonds improve financial stability as measured by the number of defaulted banks. 

\begin{proposition}\label{prop:full-coco}
A contingent system of obligations that is totally CoCo-ized, i.e., it has $\bar p^0 = \vec{0}$, will have no defaults.  
\end{proposition}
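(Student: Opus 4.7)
The strategy is to show that when $\bar p^0 = \vec 0$ the equity $E_i(A_i^*) \geq 0$ at every bank for \emph{any} clearing total asset vector $A^*$, using directly the piecewise form of $E_i$ established in Section~\ref{sec:1bank}. Once nonnegativity of equity is verified, the definition of default in Section~\ref{sec:1bank} immediately precludes any bank from defaulting.

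First I would specialize the three threshold asset levels to the case $\bar p^0_i = 0$. This gives $a_{1,i} = 0$, $a_{2,i} = 0$, and $a_{3,i} = (1+\tau_i)\bar p^c_i$, so the default subinterval $[0,a_{1,i})$ and the interval $[a_{1,i},a_{2,i}]$ both collapse: the relevant cases reduce to $A_i \in [0,a_{3,i})$ and $A_i \in [a_{3,i},\infty)$. Next, by Theorem~\ref{thm:exist}, any clearing vector $A^*$ lies in $D \subseteq \bbr_+^n$, hence $A_i^* \geq 0$ for every bank $i$.

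Then I would plug into the piecewise formula for $E_j$ derived inside the proof of Theorem~\ref{thm:exist}. For $A_i^* \in [0,a_{3,i})$ the formula gives $E_i(A_i^*) = \frac{\tau_i}{1+\tau_i} A_i^* \geq 0$, corresponding to the partial conversion regime where the CoCo mechanism exactly pins the leverage ratio to the trigger. For $A_i^* \in [a_{3,i},\infty)$, it gives $E_i(A_i^*) = A_i^* - \bar p_i^c \geq \tau_i \bar p_i^c \geq 0$, the no-conversion regime. In both cases $E_i(A_i^*) \geq 0$, so no bank satisfies the default condition $E_i < 0$.

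There is really no serious obstacle: the content of the statement is that total CoCo-ization leaves bankruptcy infeasible because the conversion parameter $\lambda_i$ can always be chosen large enough to eliminate nominal debt before equity is driven negative — formally, because $a_{1,i} = 0$ is the only asset level at which $\bar p_i(\lambda_i(A_i))$ could exceed $A_i$, and at that level both quantities vanish. I would present the argument in a few lines, emphasizing that the result holds uniformly over maximal and minimal clearing solutions given by Theorem~\ref{thm:exist}, since the nonnegativity of $E_i$ depends only on $A_i \geq 0$.
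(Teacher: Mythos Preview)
Your argument is correct and rests on the same observation as the paper: once $\bar p^0_i=0$, the default threshold $a_{1,i}=\bar p^0_i$ collapses to $0$, and since any clearing vector lies in $D\subseteq\bbr_+^n$ no bank can fall below it. The paper states this in one line using the asset-level default criterion $A_i^*<a_{1,i}$ directly, whereas you take the equivalent route of verifying $E_i(A_i^*)\ge 0$ via the piecewise equity formula; the latter is a bit more verbose but perfectly valid.
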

\begin{proof}
By construction, bank $i$ defaults if and only if $A_i^* < a_{1,i} := \bar p_i^0 = 0$. Since the assets $A_i^* \geq 0$, this can never occur.  
This contradiction implies that no bank may default. 
\end{proof}

\begin{assumption}\label{ass:maximal}
For the remainder of this work, all results and examples will be presented solely w.r.t.\ the maximal clearing assets $A^*_+$.
\end{assumption}

\subsection{Strongly Symmetric Systems}\label{sec:discussion-symmetric}

Here we consider an $n$ bank symmetric system as in Example~\ref{ex:symmetric} with liability parameters $\beta,\beta_0,y,z$.
In this subsection we stress banks by differing amounts, and to that end we allow the external assets $x_i$ to be non-equal. We also impose stronger symmetry conditions that all off-diagonal matrix entries do not depend on $i\ne j$:
\[\pi^0_{ij}=(n-1)^{-1}\pi^0,\quad \pi^c_{ij}=(n-1)^{-1}\pi^c,\quad \pi^e_{ij}=(n-1)^{-1}\pi^e\]
 where $\pi^0:=\frac{(1-\beta)z}{(1-\beta_0)y+(1-\beta)z},~\pi^c:=\frac{\beta z}{\beta_0 y + \beta z}$.
Note that this is a {\em complete financial network} in the sense of \cite{AOT15} (without CoCo obligations) and~\cite{calice2020contingent} (with CoCo obligations).
Under these assumptions, the clearing equations for $A$ can be written as
\begin{equation}
\label{eq:cc-a}
A_i=x_i+\sum_{j=1}^n F(A_j) - F(A_i), \quad i=1,2,\dots, n \ ,
\end{equation}
for any vector $X=(x_i)_{i=1,\dots, n}\in \bbr_+^n$. Following the notation of Example~\ref{ex:symmetric}, we note that the univariate functions $F_{ij},E_i,\lambda_i,c_i$ do not depend on $i\ne j$, i.e.,
\begin{align*}
F_{ij}(a) &= F(a) = \frac{1}{n-1}[(1-\lambda(a)\beta)z - \pi^0 E(a)^- + (c(a)\pi^c + (1-c(a))\pi^e)E(a)^+] \\
E_i(a) &= E(a) = \ind{a \geq a_1}a + \ind{a < a_1}\alpha a - [(1-\lambda(a)\beta_0)y + (1-\lambda(a)\beta)z] \\
\lambda_i(a) &= \lambda(a) = 1 \wedge \left(\frac{y+z - a/(1+\tau)}{\beta_0 y + \beta z}\right)^+ \\
c_i(a) &= c(a) = 1 - \left(\frac{(a \vee a_2) \wedge a_3}{a_3}\right)^{\frac{q}{\tau}} \ 
\end{align*}
with threshold asset levels $a_1 = (1-\beta_0)y + (1-\beta)z$, $a_2 = (1+\tau)[(1-\beta_0)y + (1-\beta)z]$, and $a_3 = (1+\tau)[y+z]$ which also do not depend on the bank.

Symmetric systems such as this have the following strong monotonicity property.
\begin{lemma}\label{lemma:cc-mono} 
The clearing vector $A$ in a strongly symmetric system with external asset vector $X \in \bbr_+^n$ satisfies the following monotonicity condition: for any $i\ne j$, $x_i<x_j$  implies $A_i \leq A_j$ with strict inequality if $A_j \geq a_1$ or $\alpha > 0$. 
\end{lemma}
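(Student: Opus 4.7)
The plan is to exploit the strong symmetry to reduce the $n$-dimensional clearing condition to a scalar identity relating any two banks, and then invoke the monotonicity of the univariate function $F$ established in the proof of Theorem~\ref{thm:exist}. Isolating the $i$-th summand in \eqref{eq:cc-a}, every clearing vector $A$ satisfies
\[
A_i + F(A_i) \;=\; x_i + S, \qquad S \;:=\; \sum_{k=1}^n F(A_k),
\]
for each bank $i$. The crucial observation is that $S$ depends on $A$ but not on the bank index. Subtracting these equations for two banks $i \neq j$ yields
\[
(A_i - A_j) + \bigl[F(A_i) - F(A_j)\bigr] \;=\; x_i - x_j,
\]
which drives the rest of the proof.

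For the weak inequality I would argue by contradiction. Suppose $A_i > A_j$. The proof of Theorem~\ref{thm:exist} shows that $F$ is non-decreasing on $[0,\infty)$, so $F(A_i) - F(A_j) \geq 0$. Combined with $x_i - x_j < 0$, the displayed identity forces $A_i - A_j < 0$, contradicting $A_i > A_j$. Hence $A_i \leq A_j$.

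For the strict inequality I would rule out $A_i = A_j$: under that equality the $F$-difference vanishes and the identity becomes $0 = x_i - x_j < 0$, which is impossible. The sufficient hypothesis $A_j \geq a_1$ or $\alpha > 0$ can be read off directly from the piecewise formula for $F$: on $[a_1,\infty)$, $F$ is strictly increasing because $E(\cdot)$, $\lambda(\cdot)$, and $c(\cdot)$ all contribute continuously strict pieces there; on $[0,a_1)$ one has $F(a) = (n-1)^{-1}\bigl[(1-\beta)z - \pi^0(\bar p^0 - \alpha a)\bigr]$, whose slope is positive precisely when $\alpha > 0$.

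The main obstacle is mild. Once the bank-independent quantity $S$ is identified, the lemma reduces to the already-proved monotonicity of $F$ plus a short case analysis on the four pieces $[0,a_1)$, $[a_1,a_2)$, $[a_2,a_3)$, $[a_3,\infty)$ of $F$'s piecewise definition to locate its strictly monotone portions. No new fixed-point technology is needed.
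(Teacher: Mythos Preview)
Your approach coincides with the paper's: both subtract the two instances of \eqref{eq:cc-a} and invoke the monotonicity of $F$ established in the proof of Theorem~\ref{thm:exist}. One remark worth making is that the identity $A_i + F(A_i) = x_i + S$ you isolate already yields $A_i < A_j$ \emph{unconditionally}, because $a \mapsto a + F(a)$ is strictly increasing whenever $F$ is merely nondecreasing. Your later discussion of where $F$ itself is strictly monotone, intended to justify the extra hypothesis ``$A_j \geq a_1$ or $\alpha > 0$'', is therefore superfluous---your own first step in that paragraph already rules out $A_i = A_j$. (The paper's proof asserts equality can occur when both banks default with $\alpha = 0$; your identity shows this cannot happen, since in that regime $F(A_i) = F(A_j) = 0$ and the identity gives $A_i - A_j = x_i - x_j < 0$.)
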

\begin{proof}
If $x_i<x_j$ we  suppose that  $A_i\ge A_j$ and derive a contradiction. 
Assume bank $i$ is solvent, that is $A_i \geq a_1$, the clearing equation \eqref{eq:cc-a} gives the contradiction
\begin{align*}
A_i &= x_i + \sum_{k = 1}^n F(A_k) - F(A_i) < x_j + \sum_{k = 1}^n F(A_k) - F(A_j) = A_j\ .
\end{align*}
Similarly, the clearing equations provide a contradiction if bank $i$ is insolvent, that is $A_i < a_1$, provided that either bank $j$ is solvent ($A_j \geq a_1$) or with some recovery in case of default ($\alpha > 0$).  Notably if $x_i < x_j$ but with both banks defaulting and no recovery in case of default, it is immediate to see $A_i = x_i + \sum_{k = 1}^n F(A_k) - F(A_i) = A_j$ because $\lambda(A_i) = \lambda(A_j)$ and $E(A_i) = E(A_j)$.
\end{proof}

We consider a fully solvent strongly symmetric system with $X=x\vec{1}, x\ge (1-\beta_0)y$ and apply an equal stress  $\epsilon>0$ to the external assets of a subset of $d<n$ banks. Lemma  \ref{lemma:cc-mono} then suggests that there will typically be two critical values $\epsilon_1<\epsilon_2$ such that no banks default if $\epsilon\le\epsilon_1$, the stressed banks only default if $\epsilon_1< \epsilon\le \epsilon_2$, and all banks default if $\epsilon> \epsilon_2$. The following theorem gives the important result that the critical values $\epsilon_1$ and $\epsilon_2$ (if they exist) are nondecreasing as functions of the two variables $\beta,\beta_0$. This implies the desired general result that the number of defaulted banks in the stressed system is nonincreasing in $\beta,\beta_0$. 

\begin{theorem}\label{thm:cc-coco}
Consider a strongly symmetric system of $n$ banks with $x \geq (1-\beta_0)y$ and either $\pi^e \leq \frac{\beta z}{\beta_0 y + \beta z}$ or $\tau \in (0,q]$. For any $d<n$, if the external assets of $d$ banks are subjected to shocks of size $\epsilon$, then the critical values $ \epsilon_1\le \epsilon_2$ are nondecreasing functions of $\beta,\beta_0$.
\end{theorem}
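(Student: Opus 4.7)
The plan is to exploit the rotational symmetry of the strongly symmetric system under the stress scenario to reduce the clearing problem to a pair of scalar equations in the common stressed asset $A_s$ and common unstressed asset $A_u$, derive implicit characterizations of $\epsilon_1, \epsilon_2$, and then establish the monotonicity in $(\beta_0, \beta)$ by comparative statics.

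First, by Lemma~\ref{lemma:cc-mono} and the invariance of the clearing equations under permutations of banks with equal external assets, the clearing vector necessarily takes the form $A_i \in \{A_s, A_u\}$ with $A_s \leq A_u$. The clearing equations \eqref{eq:cc-a} then collapse to
\begin{align*}
A_s &= (x-\epsilon) + (d-1) F(A_s) + (n-d) F(A_u), \\
A_u &= x + d F(A_s) + (n-d-1) F(A_u).
\end{align*}
Subtracting these and setting $H(a) := a + F(a)$ gives the identity $\epsilon = H(A_u) - H(A_s)$; specializing to the transitions $A_s = a_1$ and $A_u = a_1$ respectively yields
\[\epsilon_1 = H(A_u^*) - H(a_1), \qquad \epsilon_2 = H(a_1) - H(A_s^*),\]
where $A_u^*(\beta_0, \beta)$ is implicitly defined by $A_u - (n-d-1) F(A_u) = x + d F(a_1)$ and, on the defaulted regime $A < a_1$, $F(A) = \alpha \pi^0 A/(n-1)$ is linear, so that $A_s^*(\beta_0, \beta)$ can be solved explicitly from the unstressed equation with $A_u = a_1$.

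Next, monotonicity in $(\beta_0, \beta)$ is established by implicit differentiation. The identities $a_1 = (1-\beta_0) y + (1-\beta) z$ and $(n-1) F(a_1) = (1-\beta) z$ reduce the implicit equation for $A_u^* - a_1$ to $(A_u^* - a_1) - (n-d-1)[F(A_u^*) - F(a_1)] = x - (1-\beta_0) y$, and the hypothesis $x \geq (1-\beta_0) y$ thus ensures $A_u^* \geq a_1$. The main obstacle is a case analysis across the three conversion regimes $[a_1, a_2]$, $[a_2, a_3)$, $[a_3, \infty)$ for $A_u^*$, in each of which $F$ has a different explicit form. In the solvent regimes the slope of $F$ is controlled by the coefficient $\kappa(a) := c(a)\pi^c + (1-c(a))\pi^e$, and the side conditions $\pi^e \leq \pi^c$ (with $\pi^c = \beta z/(\beta_0 y + \beta z)$) or $\tau \leq q$ are precisely what ensure $\kappa$ is nondecreasing in $(\beta_0, \beta)$. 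Combined with the direct negative effect of $(\beta_0, \beta)$ on $H(a_1) = a_1 + F(a_1)$ (which enters $\epsilon_1$ with a minus sign), this yields $\partial \epsilon_1/\partial \beta, \partial \epsilon_1/\partial \beta_0 \geq 0$. A parallel argument, using the explicit linear form of $F$ on the defaulted region, gives $\partial \epsilon_2/\partial \beta, \partial \epsilon_2/\partial \beta_0 \geq 0$.
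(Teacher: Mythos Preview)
Your reduction to the two scalar equations and the identity $\epsilon = H(A_u)-H(A_s)$ with $H(a)=a+F(a)$ is correct and is essentially the same reformulation the paper uses (the paper works in the equity variable $v=E(A)$, but the two are equivalent). Your treatment of $\epsilon_2$ is also in line with the paper's sketch: on the defaulted branch $F(a)=\alpha\pi^0 a/(n-1)$ is linear, $A_s^*$ is explicit, and the sign of the gradient follows by direct computation.

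The genuine gap is in your argument for $\epsilon_1$. You assert that the hypotheses $\pi^e\le \pi^c$ or $\tau\le q$ ``are precisely what ensure $\kappa(a)=c(a)\pi^c+(1-c(a))\pi^e$ is nondecreasing in $(\beta_0,\beta)$,'' and that this, together with $\partial H(a_1)/\partial\beta,\partial H(a_1)/\partial\beta_0<0$, yields $\nabla\epsilon_1\ge 0$. Neither half of this claim survives scrutiny. First, $\kappa$ is \emph{not} monotone in $\beta_0$ under either hypothesis: for $a\in[a_2,a_3)$ one has $c(a)=1-(a/a_3)^{q/\tau}$ independent of $\beta_0$, while $\partial\pi^c/\partial\beta_0<0$, so $\partial\kappa/\partial\beta_0=c(a)\,\partial\pi^c/\partial\beta_0<0$ whenever $c(a)>0$. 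Second, even where $\kappa$ is monotone, you have not connected this to the sign of $\partial\epsilon_1/\partial\beta$ or $\partial\epsilon_1/\partial\beta_0$: the implicit point $A_u^*$ moves with $(\beta_0,\beta)$, and you never bound $\partial A_u^*/\partial\beta$, $\partial A_u^*/\partial\beta_0$ or show how they combine with $H'(A_u^*)$ and $\partial H/\partial\beta$. The fact that $-H(a_1)$ is increasing is not enough by itself.

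The paper's full proof (in the supplement) shows why this step is delicate. For $\epsilon_1$ it splits into three regimes for $v_n$ (equivalently $A_u^*$), and the nontrivial cases require separate, explicit estimates: in the fully converted regime one must show $\pi^e\le \Gamma$ for a specific $\Gamma\ge 1$ (this is where $\tau\le q$ enters) and, for $\beta_0$, that a certain $\Psi\le q$ via a limiting argument; in the partial-conversion regime one must bound an auxiliary function $\Theta(v)$ from below by checking its endpoint values and a monotonicity-of-the-derivative argument. None of these are consequences of ``$\kappa$ nondecreasing.'' Your outline is a reasonable scaffold, but to complete it you need to carry out the implicit differentiation in each regime and produce the explicit inequalities; the sign for $\partial\epsilon_1/\partial\beta_0$ in particular comes from a different mechanism than you suggest.
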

\begin{proof}[Sketch proof] 
For the purposes of this proof, we will consider an equivalent formulation of the clearing problem~\eqref{eq_clearing} which simplifies the computations herein.  Consider $V_i = E_i(A_i)$ to be the equity for any bank $i$ where $A$ is any clearing assets vector.  It can be observed that $V = E(A)$ if and only if
\begin{align*}
V_i &= E_i(x_i + \sum_{j = 1}^n A_{ji}(V_j)), \quad \forall i = 1,...,n \\
A_{ji}(V_j) &= \pi^0_{ji}(\bar p^0_j-V_j^-)+\pi^c_{ji}\left[(1-\lambda_j(V_j))\bar p^c_j + c_j(V_j)V_j^+\right] +\pi^e_{ji}(1-c_j(V_j))V_j^+ \\
\lambda_j(V_j) &= 1 \wedge \left(\frac{\tau_j[\bar p_j^0+\bar p_j^c] - V_j}{\tau_j \bar p_j^c}\right)^+ \\
c_j(V_j) &= 1 - \left(\frac{\bar p_j^0 + (1-\lambda_j(V_j))\bar p_j^c}{\bar p_j^0 + \bar p_j^c}\right)^{\frac{q_j}{\tau_j}}
\end{align*}
with the abuse of notation for $\lambda,c$.
Furthermore, by the monotonicity of $E$ as provided in the proof of Theorem~\ref{thm:exist}, we consider the maximal clearing equity $V$ throughout this proof which corresponds with the maximal clearing assets $A$.
With this modification, bank $i$ defaults if and only if $V_i < 0$; this is equivalent to the prior condition that $A_i < a_{1,i}$.

Within this proof, due to the assumption of a strongly symmetric system, we can simplify these constructions so that:
\begin{align*}
A_{ij}(v) &= A(v) = \frac{1}{n-1}\left[(1 - \lambda(v)\beta)z - \pi^0 v^- + [c(v)\pi^c + (1-c(v))\pi^e]v^+\right] \\
\lambda_i(v) &= \lambda(v) = 1 \wedge \left(\frac{\tau[y+z] - v}{\tau[\beta_0 y + \beta z]}\right)^+ \\
c_i(v) &= c(v) = 1 - \left(\frac{(1-\lambda(v)\beta_0)y + (1-\lambda(v)\beta)z}{y+z}\right)^{\frac{q}{\tau}}\ .
\end{align*}

Let $d<n$ and define $v_s,v_n$ to be the clearing wealths of stressed (s) and nonstressed (n) banks. We provide here only a sketch of the proof for $\epsilon_2$ under the additional assumption $\alpha<1$. We can also assume $x < (1-\beta_0)y + (1-\beta)z$, since otherwise, $v_n > 0$ for any $\epsilon>0$ and hence $\epsilon_2$ does not exist. This case provides the logic utilized for $\epsilon_1$ as well but with fewer technicalities.  Full details of the proof are given in the Supplemental Material.

Lemma \ref{lemma:cc-mono}, and monotonicity of $E$ as provided in the proof of Theorem~\ref{thm:exist}, implies that $v_s<v_n$ for all $\epsilon \in (0,\epsilon_2]$. Further, when $\epsilon=\epsilon_2$, the clearing solution can be defined by the two equations
\begin{align*}
v_s &= \alpha(x-\epsilon_2 +(d-1) A(v_s)+(n-d) A(v_n))-\bar p(\lambda(v_s)) < 0, \\
v_n &=  x +d A(v_s)+(n-d-1)A(v_n)-\bar p(\lambda(v_n))=0
\end{align*}
Under these conditions, $\lambda(v_s)=\lambda(v_n)=1$, $A(v_s)=\frac{(1-\beta)z}{n-1}\left[1 + \frac{1}{(1-\beta_0)y+(1-\beta)z}v_s\right]$, $A(v_n)=\frac{(1-\beta)z}{n-1}$, and $\bar p(\lambda(v_n))= \bar p(\lambda(v_s))=(1-\beta_0)y+(1-\beta)z$. We find reduced equations for $v_s=v_s(\beta,\beta_0)$:
\begin{align*}
v_s &= \alpha\left(x-\epsilon_2 +\frac{d-1}{n-1} \frac{(1-\beta)z}{(1-\beta_0)y + (1-\beta)z} v_s\right) -(1-\beta_0)y-(1-\alpha)(1-\beta)z \\
0 &= x+\frac{d}{n-1} \frac{(1-\beta)z}{(1-\beta_0)y + (1-\beta)z} v_s -(1-\beta_0)y 
\end{align*}
The gradient of the second equation gives 
\begin{align*}
\frac{\partial v_s}{\partial \beta} &= -\frac{n-1}{d}\frac{(1-\beta_0)y}{(1-\beta)^2 z} (x - (1-\beta_0)y), \\
\frac{\partial v_s}{\partial \beta_0} &= -\frac{(n-1)}{d} \frac{y}{(1-\beta)z}\left((1-\beta_0)y + (1-\beta)z - (x - (1-\beta_0)y)\right)\ .
\end{align*}
Taking the gradient of the first equation leads to the desired equation 
\begin{align*}
\alpha \frac{\partial \epsilon_2}{\partial \beta} &= (1-\alpha)z - \frac{\partial v_s}{\partial \beta} \geq 0, \\
\alpha \frac{\partial \epsilon_2}{\partial \beta_0} &= (1 - \frac{d-1}{d}\alpha)y - \frac{\partial v_s}{\partial \beta_0} \geq 0
\end{align*}
which shows $(\partial \epsilon_2/\partial \beta, \partial \epsilon_2/\partial \beta_0) \in\bbr^2_+$.
\end{proof}

\begin{remark}\label{rem:acemoglu}
 Theorem~\ref{thm:cc-coco} applies for shocks to an arbitrary number $d$ of banks, and thus the result for financial stability holds for random shocks as studied in~\cite{AOT15}.  That is, under correlated Bernoulli shocks all of equal size $\epsilon > 0$, both the probability of an arbitrary bank defaulting and the expected number of defaulting banks decreases as CoCo-ization increases.
\end{remark}

\subsection{Monotonicity: a Counterexample}
\label{sec:discussion-counterexample}


Although we have demonstrated that stability of strongly symmetric systems is always improved by increasing interbank and external CoCo-ization, the following simple counterexample shows that CoCo bonds do not improve financial stability in all networks.
\begin{example}\label{ex:counterexample}
\begin{figure}[t]
\centering
\begin{tikzpicture}
\tikzset{node style/.style={state, minimum width=0.36in, line width=0.5mm, align=center}}
\node[node style] at (0,0) (x1) {$x_1$};
\node[node style] at (5,0) (x2) {$x_2$};
\node[node style] at (10,0) (x0) {$x_0$};

\draw[every loop, auto=right, line width=0.5mm, >=latex]
(x1) edge[bend left=20] node[above] {$L_{12} = \bar p_1$} (x2)
(x2) edge[bend left=20] node[below] {$L_{21}$} (x1)
(x2) edge node[above] {$L_{20}$} (x0);
\end{tikzpicture}
\caption{Example~\ref{ex:counterexample}: Network topology of interbank obligations.}
\label{fig:counterexample-system}
\end{figure}
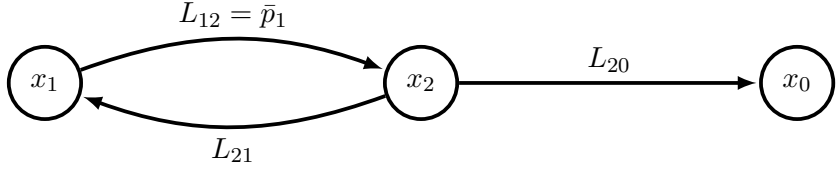
We consider the 2 bank and societal node system depicted in Figure~\ref{fig:counterexample-system} where some obligations have been CoCo-ized.  Throughout, we will specify this system with no vanilla interbank equity assets (i.e., $\pi_{ij}^e = 0$ for $i,j \in \{1,2\}$).  Let the external assets be $x_1 = 6$ and $x_2 = 1$, the interbank liabilities be $L_{12} = 10$, $L_{20} = L_{21} = 5$, and all other liabilities be identically $0$. Assuming zero recovery at default $\alpha = 0$, it can readily be verified that maximal clearing assets $A^*$ follow
\[A_1^* = \begin{cases} 11 &\text{if } \beta_1 \in [0 \; , \; \frac{9-\sqrt{41}}{20}] \\ 6 &\text{if } \beta_1 \in (\frac{9-\sqrt{41}}{20} \; , \; \frac{4}{10}) \\ 6 &\text{if } \beta_1 \in [\frac{4}{10} \; , \; \frac{7}{10}) \\ 6 &\text{if } \beta_1 \in [\frac{7}{10} \; , \; 1] \end{cases} 
\quad \text{ and } \quad 
A_2^* = \begin{cases} 10\beta_1^2 - 9\beta_1 + 11 &\text{if } \beta_1 \in [0 \; , \; \frac{9-\sqrt{41}}{20}] \\ 1 &\text{if } \beta_1 \in (\frac{9-\sqrt{41}}{20} \; , \; \frac{4}{10}) \\ 10\beta_1^2 - 14\beta + 11 &\text{if } \beta_1 \in [\frac{4}{10} \; , \; \frac{7}{10}) \\ -3\beta_1 + \frac{33}{10} &\text{if } \beta_1 \in [\frac{7}{10} \; , \; 1] \end{cases}\]
when $\beta_1 \in [0,1]$ of bank $1$'s liabilities are CoCo-ized with trigger level $\tau_1 = 1$ and with full conversion $q_1 = 1$. 
These analytical results for the CoCo-ized system can be directly compared with the vanilla interbank network ($\beta_1 = 0$). Specifically, bank $1$'s wealth (and the total (system-wide) wealth) improves with increased CoCo-ization $\beta_1$ for $\beta_1 \in [0,\frac{9-\sqrt{41}}{20}]$. However, this monotonicity does \emph{not} hold generally.  This is made more explicit by studying the set of defaulting banks (i.e., where $A_1^* < 10(1-\beta_1)$ or $A_2^* < 10$) as a function of the CoCo-ization level $\beta_1$:
\[\begin{cases} \emptyset &\text{if } \beta_1 \in [0 \; , \; \frac{9-\sqrt{41}}{20}] \\ \{1,2\} &\text{if } \beta_1 \in (\frac{9-\sqrt{41}}{20} \; , \; \frac{4}{10}) \\ \{2\} &\text{if } \beta_1 \in [\frac{4}{10} \; , \; \frac{7}{10}) \\ \{2\} &\text{if } \beta_1 \in [\frac{7}{10} \; , \; 1]. \end{cases}\]
As can be observed from this set of defaulting institutions, the introduction of CoCo bonds can actually make the financial system worse.  Furthermore, the impacts of CoCo-ization are non-trivial and non-monotonic. 

Before continuing, we want to understand which aspects of this network cause the non-monotonicity of system health w.r.t.\ the CoCo-ization level $\beta_1$ of bank $1$.  To understand this system behavior, consider the sequence of events generated from the Picard iterations as described in Remark~\ref{rem:computation}.  First, due to the stresses within this simple system, bank $1$ will convert some CoCo debts to equity regardless of bank $2$'s wealth. This CoCo conversion results in an immediate markdown in bank $2$'s assets. The more CoCo debt to be converted, the larger the markdowns; once bank $1$ converts enough debt to equity, bank $2$ can no longer remain solvent which triggers further deterioration in system health. The key aspect to this worsening system performance is the heterogeneity of the system. If, as in the strongly symmetric systems of Section~\ref{sec:discussion-symmetric}, both bank $1$ and bank $2$ were CoCo-ized at comparable levels, then the CoCo debts of bank $2$ would absorb the markdowns caused by bank $1$'s CoCo conversions. This is displayed in Figure~\ref{fig:counterexample} where it can be seen that no defaults occur along the $\beta_1 = \beta_2$ line.

We now consider more broadly how the trigger levels $\tau \in \bbr^2_{+}$ and fraction of liabilities made contingent $\beta \in [0,1]^2$ interact in this example.  To do so we specify three trigger levels: low, medium, and high ($\tau_1 = \tau_2 \in \{1,2,5\}$).  These three scenarios are then compared over the range of CoCo-ization fractions $\beta \in [0,1]^2$ to determine the default scenarios with recovery rate $\alpha = \frac{1}{3}$.  Figure~\ref{fig:counterexample} provides images of the varying default scenarios under these varying network parameters.  We wish to note that changing the recovery rate does not affect the general shape of the defaulting regions but only the sum total of defaulting institutions.  However, the trigger level causes significant impacts to the defaulting regions. 
Intuitively, the higher the trigger level $\tau$, the smaller the shock necessary to trigger CoCo conversion. As these conversions result in markdowns in asset value for the debt holder, higher trigger levels can increase contagion and, thus, defaults.  
This indicates that for CoCo bonds to be utilized for financial stability, they should be implemented with low trigger levels wherever possible so that conversion only occurs as a final effort to preclude an even more costly default.
\begin{figure}[h]
\centering
\begin{subfigure}[t]{0.3\textwidth}
\centering
\includegraphics[width=\textwidth]{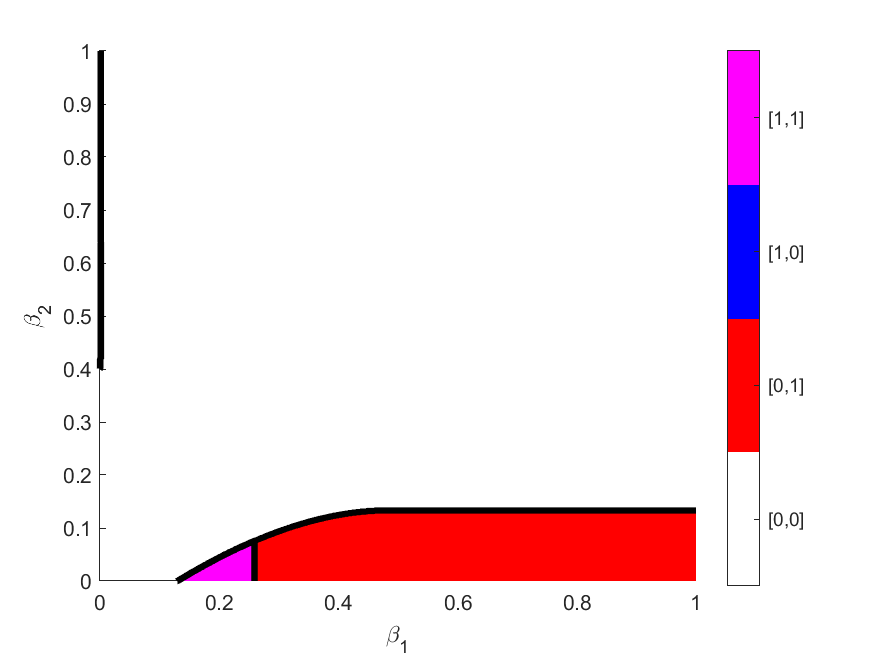}
\caption{$\tau = 1$}
\end{subfigure}
~
\begin{subfigure}[t]{0.3\textwidth}
\centering
\includegraphics[width=\textwidth]{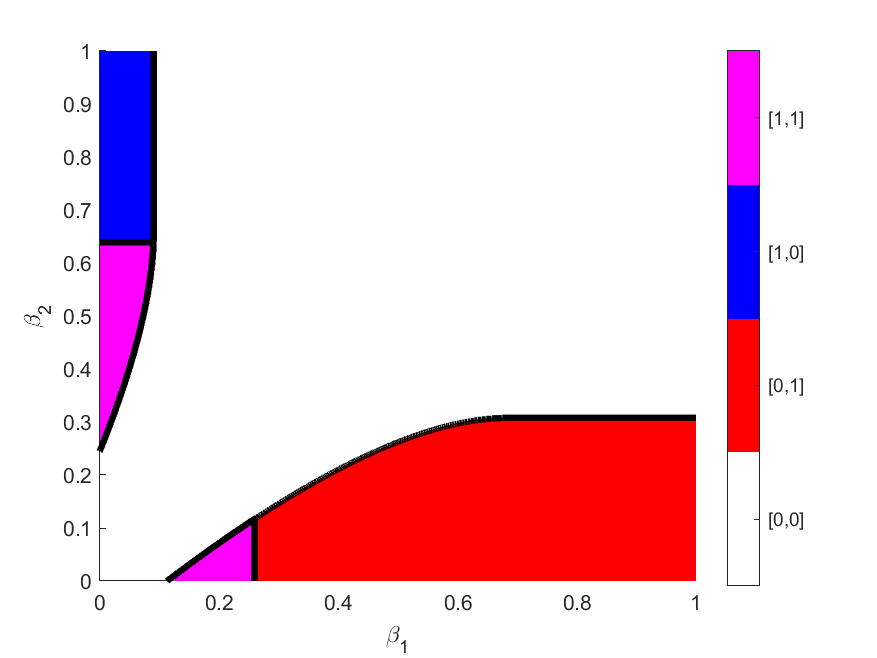}
\caption{$\tau = 2$}
\end{subfigure}
~
\begin{subfigure}[t]{0.3\textwidth}
\centering
\includegraphics[width=\textwidth]{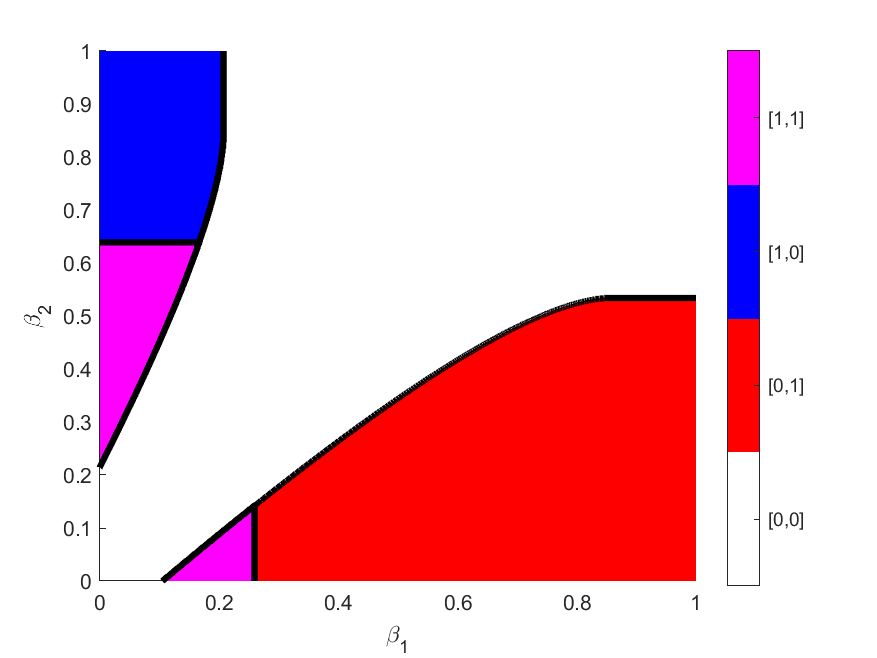}
\caption{$\tau = 5$}
\end{subfigure}
\caption{Example~\ref{ex:counterexample}: Impact of varying network parameters on the set of defaulting firms for a 2 bank system.}
\label{fig:counterexample}
\end{figure}
\end{example}

\section{European Banking Authority Case Studies}\label{sec:eba}
In this section we  study the implications of CoCo bonds on systemic risk in the EU system, using the data from the 2011 EBA stress test to calibrate the network of obligations\fn{See \url{https://eba.europa.eu/risk-analysis-and-data/eu-wide-stress-testing/2011/results}. Due to complications with the calibration methodology, we only consider 87 of the 90 institutions: DE029, LU45, and SI058 were not included in this analysis.}.  This data set, collected shortly before 
Europe's sovereign debt crisis, is of particular interest for financial contagion and has been studied in several papers under the default contagion model of~\cite{EN01}:  \cite{GV16,CLY14,feinstein2017currency,feinstein2017sensitivity}. Summary statistics of the system-wide balance sheet are provided in Table~\ref{table:EBA-data}. Figure~\ref{fig:EBA-data} shows a histogram of the size of these external obligations across the system of banks, as a fraction of the total external liabilities of  \euro23.381 \emph{trillion}. 
We note that the nature of interbank lending in Europe is such that interbank obligations are predominantly unsecured and thus the clearing payment system developed in this work is applicable.  

The EBA stress test data set identifies the total assets $TA_i$, capital $C_i$, and aggregated interbank liabilities $\sum_{j = 1}^n L_{ij}$ for all banks $i$ for a date in 2011. Since this is insufficient to construct the necessary stylized bank balance sheets for our model, we make additional assumptions. First, since equity cross-holdings between systemically important financial institutions are typically small, we assume the original equity cross-holding for every pair of firms $i,j$ has $\pi_{ij}^e = 0$.     Next, as  in \cite{CLY14,GY14}, we assume that aggregated interbank liabilities are equal to aggregated interbank assets,\fn{In actuality, as done in \cite{GV16} we perturb the interbank assets slightly so as to satisfy some technical conditions.} that all assets that are not interbank must be external, and that all liabilities that are not capital or interbank must be external. Thus, the external assets and liabilities are given by
\begin{align*}
x_i &:= TA_i - \sum_{j = 1}^n L_{ji} \quad \text{and} \quad L_{i0} := TA_i - \sum_{j = 1}^n L_{ij} - C_i\ .
\end{align*}
 Finally, we utilize the method of \cite{GV16} with parameters $p = 0.5$, $\text{thinning} = 10^4$, $n_{\text{burn-in}} = 10^9$, and $\lambda = \frac{p n (n-1)}{\sum_{i = 1}^n \sum_{j = 1}^n L_{ij}} \approx 0.00122$ to find a single realization of the liability matrix $L_{ij}$ consistent with the calibrated row and column sums.  

The following three case studies focus on the implications on systemic risk in the EU network of CoCo-izing the debts $L$ at varying trigger levels $\tau$, external and interbank CoCo fractions $\beta_0,\beta$, external shocks parametrized by $\xi$, and counterfactual levels of interbank debt $\gamma$. These studies all have fixed recovery rate $\alpha = \frac{1}{2}$ and conversion rate $q = 1$. All variable parameters are chosen to be constant across all banks due to the insights gained from Example~\ref{ex:counterexample} in which we determined that the heterogeneity of CoCo debts triggered adverse impacts. Broadening the scope of Section~\ref{sec:discussion}, we now look at three different systemic risk measures realized at maturity $T$:  the fraction of total external liabilities that are paid as debt and equity; the total value of the original shares; the total number of defaulted banks. Recall from Assumption~\ref{ass:maximal} that all results presented herein are w.r.t.\ the maximal clearing assets.

\begin{table}
\centering
\begin{tabular}{r|c|c|}
\cline{2-3}
& {\bf Total Assets (\euro\emph{trillion})} & {\bf Total Liabilities (\euro\emph{trillion})} \\ \cline{2-3}\noalign{\vskip\doublerulesep\vskip-\arrayrulewidth}\cline{2-3}
{\bf External:}~ & 24.383 & 23.381\\ \cline{2-3}
{\bf Interbank:}~ & 3.072 & 3.072\\ \cline{2-3}
{\bf Capital:}~ & -- & 1.002\\ \cline{2-3}
\end{tabular}
\caption{Summary statistics of the banking system from the 2011 EBA stress test data set.}
\label{table:EBA-data}
\end{table}

\begin{figure}[t]
\centering
\includegraphics[width=.6\textwidth]{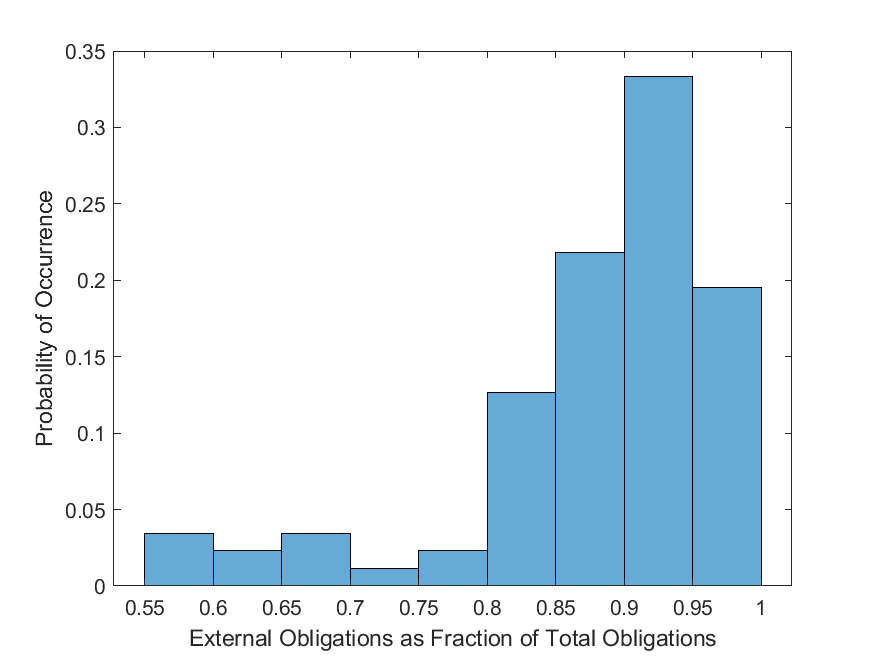}
\caption{Histogram of the fractional amounts that external liabilities make up of the total obligations, for all banks in the 2011 EBA stress test data set.}
\label{fig:EBA-data}
\end{figure}

\begin{study}\label{ex:EBA}
This study considers the calibrated EBA network model under a stress-test scenario where all external bank assets are subjected to a fractional decrease of $\xi=3\%$.   We consider two different CoCo-ization schemes with variable levels $\beta_0,\beta$ and trigger $\tau$: 
\begin{itemize}
\item \emph{Full CoCo-ization}: Equal fractions $\beta_0=\beta>0$ of all debts, both interbank and external, are CoCo-ized; and
\item \emph{Interbank CoCo-ization}: A fraction $\beta>0$ of interbank debts are CoCo-ized, and external debts remain vanilla ($\beta_0=0$).
\end{itemize}
We do not consider \emph{external CoCo-ization} with $\beta_0>0,\beta=0$ -- this scheme appears approximately equivalent to the full CoCo-ization scheme.  For the plain Eisenberg-Noe system with $\beta_0 =\beta = 0$ (i.e., when all debts are vanilla), the 3\% stress scenario  leads to the default of 72 of the 87 banks, with 51.48\% of debt owed to society repaid.  

Figures~\ref{fig:EBA-all} and~\ref{fig:EBA-bank} display the fractional repayment of external liabilities under full CoCo-ization and interbank CoCo-ization respectively.  
First we note the similarities between these two schemes to determine heuristics that may generally hold.  In particular, the behavior of the system worsens as the trigger level $\tau$ increases and as the level of CoCo-ization $\beta$ decreases.  
In this static setting, the best case scenario is for full repayment. For this reason the health of the system is monotonic in the trigger level $\tau$: the greater the trigger level $\tau$ the more often debts are converted and, as the equity has no time to grow, a write-down occurs.  
Now consider the full CoCo-ization scheme depicted in Figure~\ref{fig:EBA-all}.  For low levels of CoCo-ization, there are a significant number of defaults (as is the case in the purely Eisenberg-Noe setting with $\beta = 0$), but at a threshold level of $\beta \approx 1.2\%$ the cascade of defaults is eliminated and all banks become solvent with the external system (i.e., the real economy) recovering $99.84\%$ of the face value of its initial holdings.  However, even though no additional defaults occur, as more debts are CoCo-ized the fractional repayment to the external system decreases.
In contrast, for the interbank CoCo-ization scheme depicted in Figure~\ref{fig:EBA-bank}, the interaction between the level of CoCo-ization $\beta$ and the trigger level $\tau$ is more complicated.  For the computed trigger levels $\tau$, the external system recovers as much as $99.86\%$ repayment at $\beta \approx 11.7\%$ CoCo-ization. 
We conclude the discussion of payments by directly comparing the full CoCo-ization and interbank CoCo-ization schemes.  Clearly, for low CoCo-ization levels $\beta$, full CoCo-ization outperforms the interbank CoCo-ization by significant margins.  However, for high CoCo-ization levels $\beta$ the two schemes are comparable (in fact the interbank CoCo-ization slightly outperforms full CoCo-ization for these high levels of $\beta$).

In Figures~\ref{fig:EBA-all-equity} and~\ref{fig:EBA-bank-equity}, the value for the original shareholders is displayed under full CoCo-ization and interbank CoCo-ization respectively.
Notably, under full CoCo-ization, the original shareholders benefit under higher trigger levels $\tau$ \emph{and} greater CoCo-ization.  In fact, by comparing Figures~\ref{fig:EBA-all-equity} and~\ref{fig:EBA-all}, it can be observed that for CoCo-ization with high enough $\beta$ so that defaults are avoided, the original shareholders benefit at the expense of the bond holders and vice versa.
In contrast, interbank CoCo-ization can increase the original shareholder value only to a minimal degree, and does so in tandem with increasing benefits for the external debtholders as shown in Figures~\ref{fig:EBA-bank-equity} and~\ref{fig:EBA-bank}.
Clearly, the full CoCo-ization scheme benefits the original shareholders more than interbank CoCo-ization, but the original shareholders would prefer both schemes over the no CoCo-ization scheme ($\beta = 0$).
\begin{figure}[t]
\centering
\begin{subfigure}[t]{0.45\textwidth}
\centering
\includegraphics[width=\textwidth]{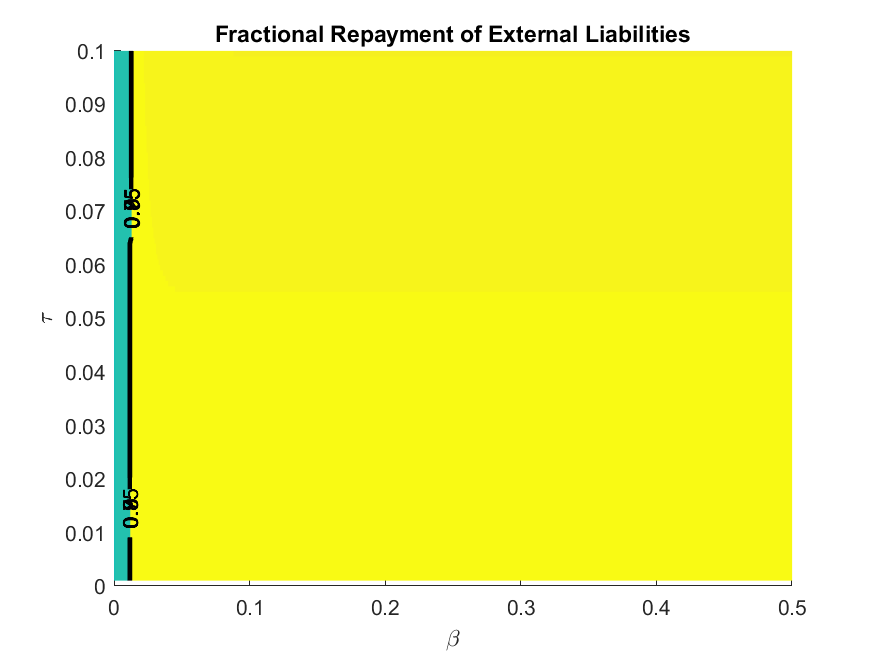}
\caption{Full CoCo-ization}
\label{fig:EBA-all}
\end{subfigure}
~
\begin{subfigure}[t]{0.45\textwidth}
\centering
\includegraphics[width=\textwidth]{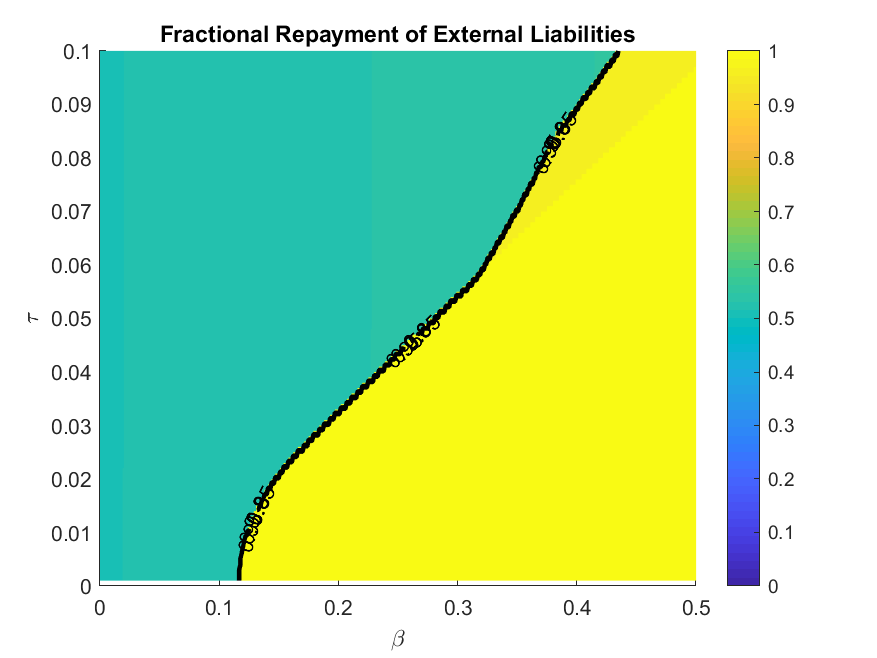}
\caption{Interbank CoCo-ization}
\label{fig:EBA-bank}
\end{subfigure}
\caption{The final fractional value of external liabilities for the EU network  versus the contingent convertible fraction $\beta$ and the trigger level $\tau$.}
\label{fig:EBA}
\end{figure}
\begin{figure}[t]
\centering
\begin{subfigure}[t]{0.45\textwidth}
\centering
\includegraphics[width=\textwidth]{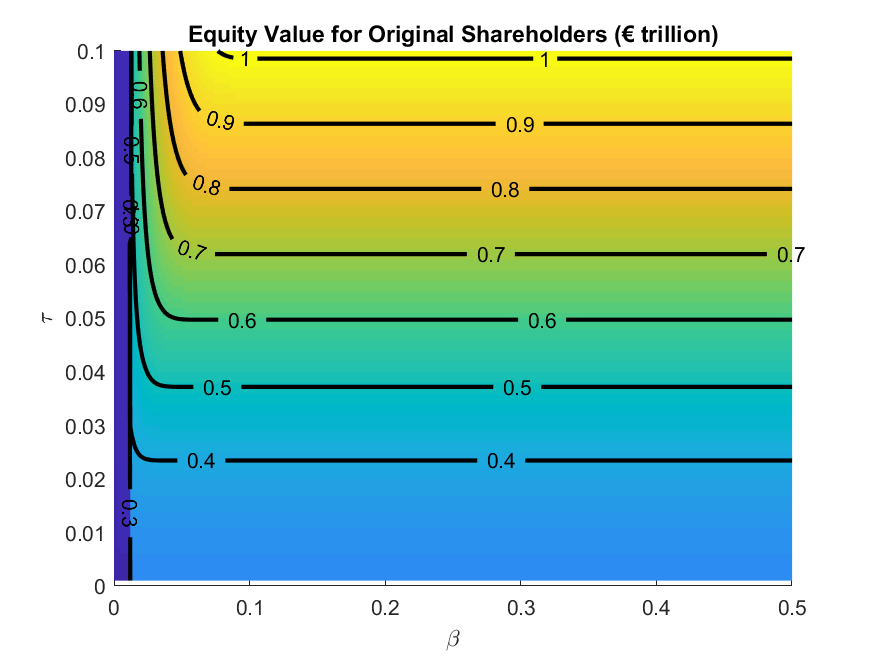}
\caption{Full CoCo-ization}
\label{fig:EBA-all-equity}
\end{subfigure}
~
\begin{subfigure}[t]{0.45\textwidth}
\centering
\includegraphics[width=\textwidth]{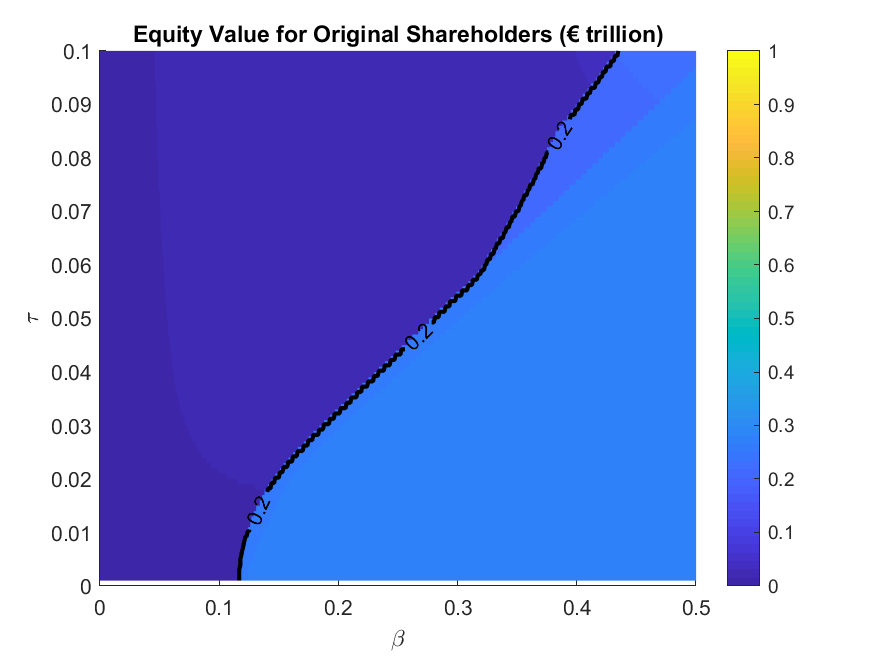}
\caption{Interbank CoCo-ization}
\label{fig:EBA-bank-equity}
\end{subfigure}
\caption{The final value of the original shares for the EU network  versus  the contingent convertible fraction $\beta$  and the trigger level $\tau$.}
\label{fig:EBA-equity}
\end{figure}

\end{study}

\begin{study}\label{ex:EBA-stress}
This study investigates the stability of the EU network when the external assets are subjected to a variable stress of fractional size $\xi$. It compares the no CoCo-ization scheme to three CoCo-ization schemes with trigger level at the Basel III tier 1 leverage requirement $\tau = 0.03$: 
\begin{itemize}
\item \emph{full CoCo-ization} $\beta_0=\beta=1$: all debts, both interbank and external, are CoCo-ized;
\item \emph{external CoCo-ization} $\beta_0=1,\beta=0$: all external debts are CoCo-ized and all interbank debts are vanilla;
\item \emph{interbank CoCo-ization} $\beta_0=0,\beta=1$: all interbank debts are CoCo-ized and all external debts are vanilla.\end{itemize}
The no CoCo-ization scheme with $\beta_0=\beta=0$ corresponds to the plain Eisenberg-Noe system.  For each fraction $\xi\in [0\%,10\%]$, a system-wide stress is applied leaving each bank with $(1-\xi)x$  in external assets. Figure~\ref{fig:EBA-stress} displays different systemic risk measures for the resultant equilibrium:  Figure~\ref{fig:EBA-stresspay} shows the fractional value of debts owed external to the system; Figure~\ref{fig:EBA-stressequity} plots the value for the original shareholders;  and Figure~\ref{fig:EBA-stressdefault} shows the number of defaulting banks. In general, we see that the system without CoCo-ization performs poorly, encountering a large number of defaults above the threshold $\xi =1.77\%$, with low debt repayment  and large losses for the original shareholders. The full CoCo-ization and external CoCo-ization schemes behave alike: All banks remain solvent at all stress levels and the losses to the external system are solely caused by the fractional conversion of CoCo debts (see Example~\ref{ex:setting-coco} for a simple construction for such losses).  Finally, the interbank CoCo-ization scheme performs almost as well as the other two schemes up to a sizeable stress level $\xi = 3.95\%$, albeit at the expense of a single bank defaulting.  Beyond $\xi = 3.95\%$,  interbank CoCo-ization exhibits significant default contagion, though it always outperforms the no CoCo-ization scheme in repayments of debts, equity for the original shareholders, and number of defaults.
\begin{figure}[t]
\centering
\begin{subfigure}[t]{0.3\textwidth}
\centering
\includegraphics[width=\textwidth]{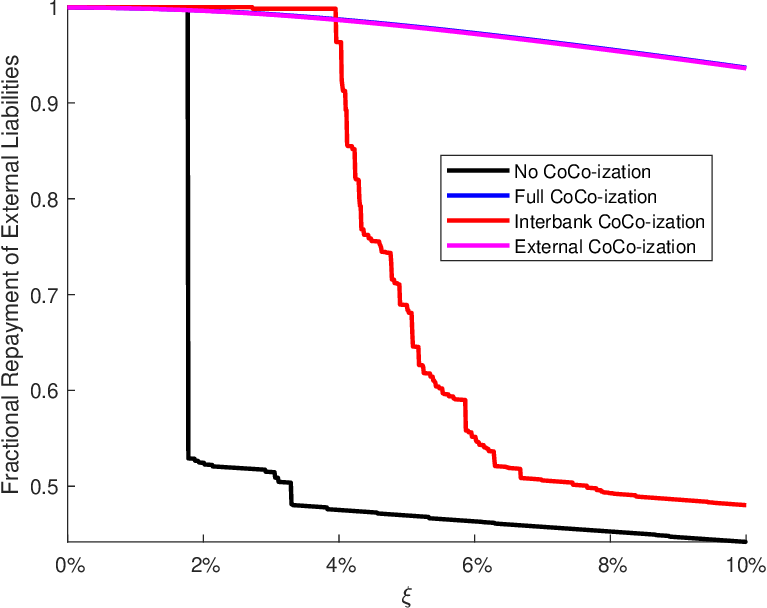}
\caption{Fractional value of the external debt}
\label{fig:EBA-stresspay}
\end{subfigure}
~
\begin{subfigure}[t]{0.3\textwidth}
\centering
\includegraphics[width=\textwidth]{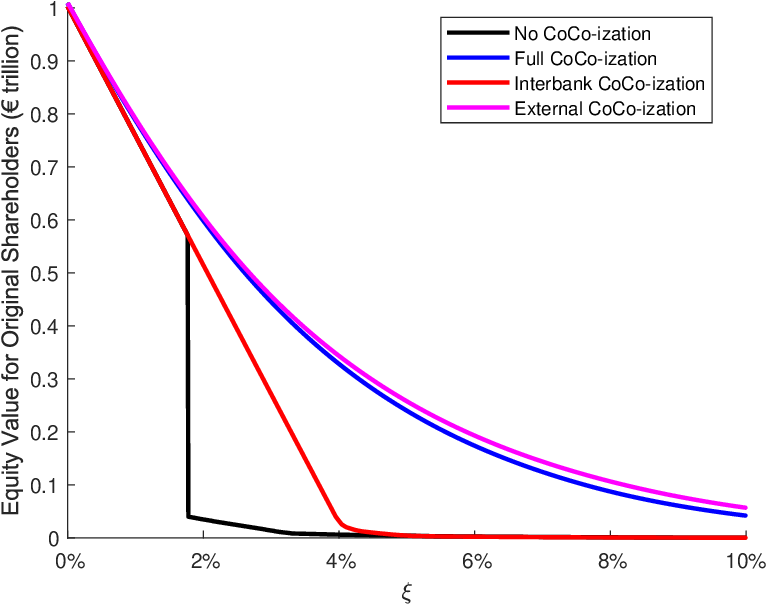}
\caption{Total value of the original shares}
\label{fig:EBA-stressequity}
\end{subfigure}
~
\begin{subfigure}[t]{0.3\textwidth}
\centering
\includegraphics[width=\textwidth]{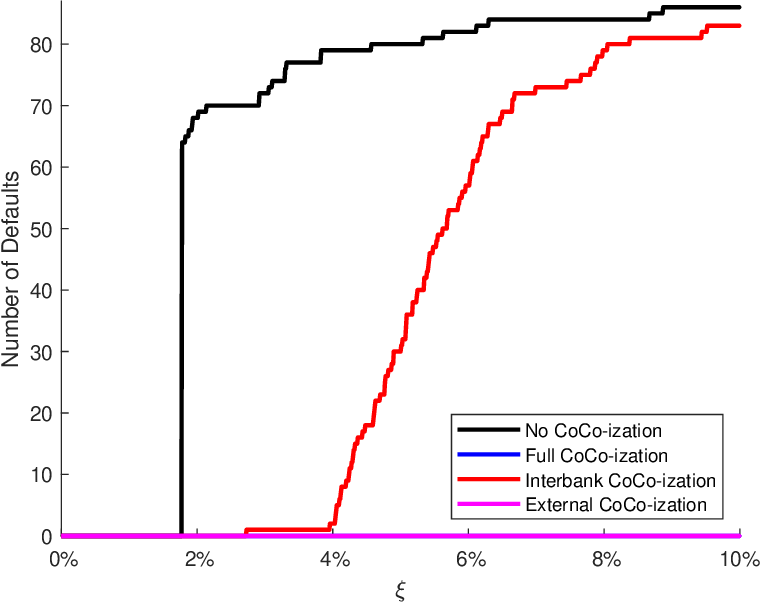}
\caption{Number of defaulted banks}
\label{fig:EBA-stressdefault}
\end{subfigure}
\caption{Three systemic risk measures are plotted for the EU network versus the fractional stress level $\xi$, under four different CoCo-ization schemes.}
\label{fig:EBA-stress}
\end{figure}
\end{study}

\begin{study}\label{ex:EBA-interbank}
This case study of the EU network investigates how systemic risk depends on the level of the interbank debt. We study the same three CoCo-ization schemes from Case Study~\ref{ex:EBA-stress} (again with $\tau = 0.03$ matching the Basel III tier 1 leverage requirement) but with a variable  fraction $\gamma \in (0,1]$ of the original external liabilities  replaced by interbank obligations.  That is,  for any $\gamma\le 1$, the total external liabilities is \euro$23.381(1-\gamma)$ \emph{trillion}, and the  total interbank assets and liabilities of \euro$(3.072+23.381\gamma)$ \emph{trillion} are distributed in the same proportions as in the original EU network.  The external assets are also reduced to keep the capital of each bank constant.  We then stress these (modified) external assets by a shock with $\xi=5\%$ (though not displayed, the results for stresses of $3\%$ as in Study~\ref{ex:EBA} or $10\%$, are similar). Our main motivation is to see whether interbank lending improves financial stability if the lending contract is a CoCo. 

Figure~\ref{fig:EBA-interbank} displays how three different systemic risk measures of the resultant equilibrium vary for $\gamma \in [0,1]$. As expected, the stressed system without CoCo-ization worsens as $\gamma$ increases,  exhibiting large numbers of defaults, low repayment under the stress scenario, and large losses for the original shareholders at all levels of $\gamma$.  In striking contrast, the interbank CoCo-ization scheme uniformly improves as the interbank obligations increase. Nearly all external liabilities paid in full (at over $99.83\%$, with only a single default) when $\gamma\ge 18.5\%$; all external liabilities are paid in full and no defaults occur when $\gamma \ge 35.8\%$. The original shareholders also uniformly benefit with increasing interbank CoCo-ization.  Full CoCo-ization and external CoCo-ization both exhibit increasing losses for the external debt holders as interbank obligations increase, and in the case of external CoCo-ization all banks eventually default as interbank obligations become large enough.  Intriguingly, the benefit to the original shareholders is non-monotonic: it increases with $\gamma$ only up to a threshold  of $\gamma = 63.0\%$ under full CoCo-ization and  $\gamma = 55.5\%$ under external CoCo-ization, and decreases thereafter.   Losses beyond  $\gamma = 55.5\%$ are extremely rapid for external CoCo-ization as the shareholders lose wealth from the default of banks.
\begin{figure}[t]
\centering
\begin{subfigure}[t]{0.3\textwidth}
\centering
\includegraphics[width=\textwidth]{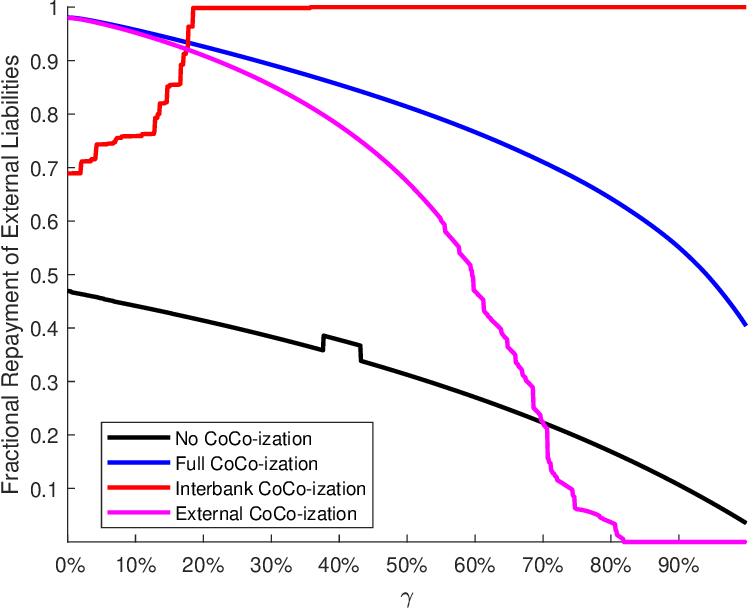}
\caption{Fractional value of the external debt}
\label{fig:EBA-fracpay}
\end{subfigure}
~
\begin{subfigure}[t]{0.3\textwidth}
\centering
\includegraphics[width=\textwidth]{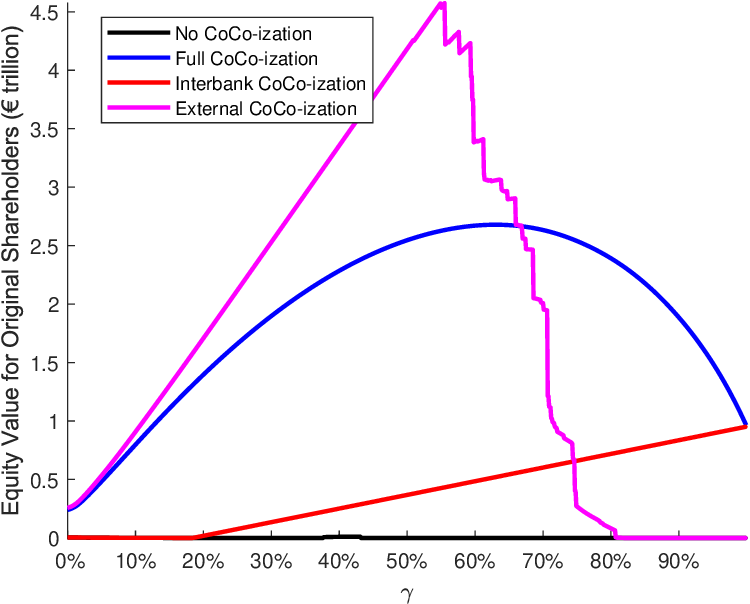}
\caption{Total value of the original shares}
\label{fig:EBA-fracequity}
\end{subfigure}
~
\begin{subfigure}[t]{0.3\textwidth}
\centering
\includegraphics[width=\textwidth]{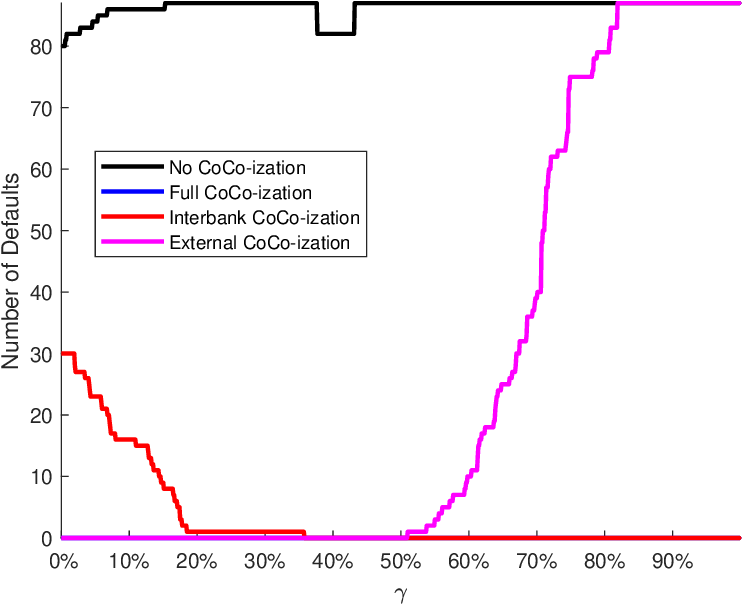}
\caption{Number of defaulted banks}
\label{fig:EBA-fracdefault}
\end{subfigure}
\caption{Three systemic risk measures are plotted for the EU network versus  the interbank fraction $\gamma$, under four different CoCo-ization schemes.}
\label{fig:EBA-interbank}
\end{figure}
\end{study}

\section{Conclusion}\label{sec:conclusion}

There can be no simple answer to the question ``Do CoCos reduce financial systemic risk?''.   The single firm market valuation studies of \cite{glasserman2012coco,AKB2013coco,SMS2018coco}, as well as our clearing analysis of Section~\ref{sec:1bank}, show that the answer is yes for fractional CoCo financing of a single bank. However, the counterexample given in Section~\ref{sec:discussion-counterexample} illustrates that no all-purpose theorem to this effect is provable in the very general CoCo network setting we introduce in Section~\ref{sec:nbank}. Nonetheless, the main contributions of Sections~\ref{sec:discussion} and~\ref{sec:eba} provide evidence that in typical networks, such as the EU system at the time of the 2011 EBA stress test, encouraging the use of standardized interbank CoCos in place of unsecured interbank lending could improve resilience to financial contagion. For example, our case studies in Section~\ref{sec:eba} have shown that replacing 100\% of the unsecured interbank lending in the EU system by standardized fractional CoCo contracts with trigger level $\tau=0.03$ and full conversion $q = 1$ is a win-win-win situation for financial stability: three different systemic risk measures show significant improvement over the vanilla funded system. The same case studies also provide evidence that externally held CoCos may not provide the same uniform improvements to financial stability that interbank CoCos provide. These observations suggest that future banking regulation needs to explore positive incentives to the use of standardized CoCo interbank lending contracts, with fractional conversion and a trigger level $\tau$ set near the Basel III  3\% tier 1 leverage ratio requirement. 

Justifying regulatory changes of this type would require much greater theoretical understanding of CoCos and systemic risk. First, researchers need to do much more extensive data-driven case studies of CoCo financing for other financial jurisdictions and other time periods. Second, to understand the pricing of CoCos and related securities in a financial system,    single firm market valuation should be extended to  CoCo generalizations of the dynamic vanilla network models of \cite{gourieroux2012}.  Such studies may reveal, for example, that the higher interest rates that need to be paid on CoCos with low triggers will limit their adoption by financial institutions. Thirdly, much more investigation is needed to determine parameters for a standardized CoCo that will effectively address financial systemic stability. Finally, one needs to explore theoretically how expanding the use of CoCos issued by financial institutions will impact global bond and repo markets.

\bibliographystyle{plainnat}
\bibliography{bibtex2}

\begin{thebibliography}{26}
\providecommand{\natexlab}[1]{#1}
\providecommand{\url}[1]{\texttt{#1}}
\expandafter\ifx\csname urlstyle\endcsname\relax
  \providecommand{\doi}[1]{doi: #1}\else
  \providecommand{\doi}{doi: \begingroup \urlstyle{rm}\Url}\fi

\bibitem[Acemoglu et~al.(2015)Acemoglu, Ozdaglar, and Tahbaz-Salehi]{AOT15}
Daron Acemoglu, Asuman Ozdaglar, and Alireza Tahbaz-Salehi.
\newblock Systemic risk and stability in financial networks.
\newblock \emph{American Economic Review}, 105\penalty0 (2):\penalty0 564--608,
  2015.

\bibitem[Avdjiev et~al.(2013)Avdjiev, Kartasheva, and Bogdanova]{AKB2013coco}
Stefan Avdjiev, Anastasia Kartasheva, and Bilyana Bogdanova.
\newblock {CoCos}: A primer.
\newblock \emph{BIS Quarterly Review}, page~43, September 2013.

\bibitem[Balter et~al.(2022)Balter, Schweizer, and Vera]{balter2020contingent}
Anne~G Balter, Nikolaus Schweizer, and Juan~C Vera.
\newblock Contingent capital with stock price triggers in interbank networks.
\newblock \emph{Mathematics of Operations Research}, 2022.
\newblock DOI: 10.1287/moor.2022.1278.

\bibitem[Banerjee and Feinstein(2019)]{banerjee2017insurance}
Tathagata Banerjee and Zachary Feinstein.
\newblock Impact of contingent payments on systemic risk in financial networks.
\newblock \emph{Mathematics and Financial Economics}, 13\penalty0 (4):\penalty0
  617--636, 2019.

\bibitem[Banerjee and Feinstein(2022)]{BF18comonotonic}
Tathagata Banerjee and Zachary Feinstein.
\newblock Pricing of debt and equity in a financial network with comonotonic
  endowments.
\newblock \emph{Operations Research}, 2022.
\newblock DOI: 10.1287/opre.2022.2275.

\bibitem[Banerjee et~al.(2022)Banerjee, Bernstein, and Feinstein]{BBF18}
Tathagata Banerjee, Alex Bernstein, and Zachary Feinstein.
\newblock Dynamic clearing and contagion in financial networks.
\newblock 2022.
\newblock Working paper.

\bibitem[Calice et~al.(2020)Calice, Sala, and Tantari]{calice2020contingent}
Giovanni Calice, Carlo Sala, and Daniele Tantari.
\newblock Contingent convertible bonds in financial networks.
\newblock 2020.
\newblock Working paper.

\bibitem[Chen et~al.(2016)Chen, Liu, and Yao]{CLY14}
Nan Chen, Xin Liu, and David~D. Yao.
\newblock An optimization view of financial systemic risk modeling: The network
  effect and the market liquidity effect.
\newblock \emph{Operations Research}, 64\penalty0 (5), 2016.

\bibitem[Eisenberg and Noe(2001)]{EN01}
Larry Eisenberg and Thomas~H. Noe.
\newblock Systemic risk in financial systems.
\newblock \emph{Management Science}, 47\penalty0 (2):\penalty0 236--249, 2001.

\bibitem[Feinstein(2019)]{feinstein2017currency}
Zachary Feinstein.
\newblock Obligations with physical delivery in a multi-layered financial
  network.
\newblock \emph{SIAM Journal on Financial Mathematics}, 10\penalty0
  (4):\penalty0 877--906, 2019.

\bibitem[Feinstein et~al.(2018)Feinstein, Pang, Rudloff, Schaanning, Sturm, and
  Wildman]{feinstein2017sensitivity}
Zachary Feinstein, Weijie Pang, Birgit Rudloff, Eric Schaanning, Stephan Sturm,
  and Mackenzie Wildman.
\newblock Sensitivity of the {E}isenberg and {N}oe clearing vector to
  individual interbank liabilities.
\newblock \emph{SIAM Journal on Financial Mathematics}, 9\penalty0
  (4):\penalty0 1286--1325, 2018.

\bibitem[Gandy and Veraart(2016)]{GV16}
Axel Gandy and Luitgard~A.M. Veraart.
\newblock A {B}ayesian methodology for systemic risk assessment in financial
  networks.
\newblock \emph{Management Science}, 63\penalty0 (12):\penalty0 4428--4446,
  2016.

\bibitem[Glasserman and Nouri(2012)]{glasserman2012coco}
Paul Glasserman and Behzad Nouri.
\newblock Contingent capital with a capital-ratio trigger.
\newblock \emph{Management Science}, 58\penalty0 (10):\penalty0 1816--1833,
  2012.

\bibitem[Glasserman and Young(2015)]{GY14}
Paul Glasserman and H.~Peyton Young.
\newblock How likely is contagion in financial networks?
\newblock \emph{Journal of Banking and Finance}, 50:\penalty0 383--399, 2015.

\bibitem[Gouri{\'e}roux et~al.(2012)Gouri{\'e}roux, H{\'e}am, and
  Monfort]{gourieroux2012}
Christian Gouri{\'e}roux, Jean-Cyprian H{\'e}am, and Alain Monfort.
\newblock Bilateral exposures and systemic solvency risk.
\newblock \emph{Canadian Journal of Economics}, 45\penalty0 (4):\penalty0
  1273--1309, 2012.

\bibitem[Gupta et~al.(2021)Gupta, Lu, and Wang]{gupta2019coco}
Aparna Gupta, Yueliang Lu, and Runzu Wang.
\newblock Addressing systemic risk using contingent convertible debt — a
  network analysis.
\newblock \emph{European Journal of Operational Research}, 290\penalty0
  (1):\penalty0 263--277, 2021.

\bibitem[Klages-Mundt and Minca(2020)]{minca2018reinsurance}
Ariah Klages-Mundt and Andreea Minca.
\newblock Cascading losses in reinsurance networks.
\newblock \emph{Management Science}, 66\penalty0 (9):\penalty0 4246--4268,
  2020.

\bibitem[Kusnetsov and Veraart(2019)]{KV16}
Michael Kusnetsov and Luitgard~A.M. Veraart.
\newblock Interbank clearing in financial networks with multiple maturities.
\newblock \emph{SIAM Journal on Financial Mathematics}, 10\penalty0
  (1):\penalty0 37--67, 2019.

\bibitem[Li et~al.(2022{\natexlab{a}})Li, Guo, and Meng]{li2020default}
Ping Li, Yanhong Guo, and Hui Meng.
\newblock The default contagion of contingent convertible bonds in financial
  network.
\newblock \emph{The North American Journal of Economics and Finance},
  60:\penalty0 101661, 2022{\natexlab{a}}.

\bibitem[Li et~al.(2022{\natexlab{b}})Li, Guo, and Meng]{li2020impact}
Ping Li, Yanhong Guo, and Hui Meng.
\newblock The impact of coco bonds on systemic risk considering liquidity risk.
\newblock \emph{Quantitative Finance}, 22\penalty0 (2):\penalty0 385--406,
  2022{\natexlab{b}}.

\bibitem[Rogers and Veraart(2013)]{RV13}
Leonard~C.G. Rogers and Luitgard~A.M. Veraart.
\newblock Failure and rescue in an interbank network.
\newblock \emph{Management Science}, 59\penalty0 (4):\penalty0 882--898, 2013.

\bibitem[Schuldenzucker et~al.(2017)Schuldenzucker, Seuken, and
  Battiston]{SSB16b}
Steffen Schuldenzucker, Sven Seuken, and Stefano Battiston.
\newblock {Finding Clearing Payments in Financial Networks with Credit Default
  Swaps is PPAD-complete}.
\newblock In Christos~H. Papadimitriou, editor, \emph{8th Innovations in
  Theoretical Computer Science Conference (ITCS 2017)}, volume~67 of
  \emph{Leibniz International Proceedings in Informatics (LIPIcs)}, pages
  32:1--32:20, Dagstuhl, Germany, 2017. Schloss Dagstuhl--Leibniz-Zentrum fuer
  Informatik.
\newblock ISBN 978-3-95977-029-3.
\newblock \doi{10.4230/LIPIcs.ITCS.2017.32}.

\bibitem[Schuldenzucker et~al.(2019)Schuldenzucker, Seuken, and
  Battiston]{SSB17}
Steffen Schuldenzucker, Sven Seuken, and Stefano Battiston.
\newblock Default ambiguity: Credit default swaps create new systemic risks in
  financial networks.
\newblock \emph{Management Science}, 66\penalty0 (5):\penalty0 1981--1998,
  2019.

\bibitem[Spiegeleer et~al.(2018)Spiegeleer, Marquet, and
  Schoutens]{SMS2018coco}
Jan~De Spiegeleer, Ine Marquet, and Wim Schoutens.
\newblock \emph{The Risk Management of Contingent Convertible (CoCo) Bonds}.
\newblock SpringerBriefs in Finance. Springer International Publishing, 2018.
\newblock ISBN 9783030018245.

\bibitem[Suzuki(2002)]{suzuki2002}
Teruyoshi Suzuki.
\newblock Valuing corporate debt: the effect of cross-holdings of stock and
  debt.
\newblock \emph{Journal of Operations Research}, 45\penalty0 (2):\penalty0
  123--144, 2002.

\bibitem[Weber and Weske(2017)]{AW_15}
Stefan Weber and Kerstin Weske.
\newblock The joint impact of bankruptcy costs, fire sales and cross-holdings
  on systemic risk in financial networks.
\newblock \emph{Probability, Uncertainty and Quantitative Risk}, 2\penalty0
  (1):\penalty0 9, 2017.

\end{thebibliography}

\pagebreak
\appendix
\begin{center}
{\LARGE Supplemental Material}
\end{center}

\section{Proof of Theorem~\ref{thm:cc-coco}}
Already the result has been proven for $\epsilon_2$ with $\alpha > 0$.  If $\alpha = 0$ then either $\epsilon_2 = \epsilon_1$ (as the defaults of the $d$ stressed banks are sufficient to force the unstressed banks into default as well) or $\epsilon_2$ \emph{does not exist} (as the immediate default of the $d$ stressed banks are insufficient to force the unstressed banks into default).  As such, the $\alpha = 0$ reduces to studying $\epsilon_1$.

Now we follow similar steps to investigate $\epsilon_1$, for which two equations hold:
\begin{align}
\label{eq:eps1-1} v_s &= x-\epsilon_1 +(d-1) A(v_s)+(n-d) A(v_n)-\bar p(\lambda(v_s))=0 \\
\label{eq:eps1-2} v_n &= x +d A(v_s)+(n-d-1)A(v_n)-\bar p(\lambda(v_n))>0
\end{align}
with $\lambda(v_s)=1$ and $A(v_s) = \frac{(1-\beta)z}{n-1}$. 
(Note that these equations do not depend on $\alpha$.)
To approach this problem, we reformulate \eqref{eq:eps1-1} and \eqref{eq:eps1-2} as
\begin{align}
\nonumber \epsilon_1 &= x + \frac{n-d}{n-1}p - (1-\beta_0)y-\frac{n-d}{n-1}(1-\beta)z\\
\label{eq:eps1-p} p &= (1-\beta\lambda(v_n))z + \pi^e(\lambda(v_n))\left[x + \frac{d}{n-1}(1-\beta)z + \frac{n-d-1}{n-1}p - \bar p(\lambda(v_n))\right]
\end{align}
which leads to $\nabla \epsilon_1 := \left(\frac{\partial \epsilon_1}{\partial \beta} \, , \, \frac{\partial \epsilon_1}{\partial \beta_0}\right)^\T = \left(\frac{n-d}{n-1}\left[\frac{\partial p}{\partial \beta} + z\right] \, , \, y + \frac{n-d}{n-1}\frac{\partial p}{\partial \beta_0}\right)^\T$.
There are three subcases to analyze: the $I_1$ case where $v_n \geq \tau (y+z)$ and $\lambda(v_n)=0$, the $I_2$ case where $v_n \in \tau ((1-\beta_0)y+(1-\beta)z \, , \, y+z)$ and $\lambda\in(0,1)$, and the $I_3$ case where $v_n \leq \tau ((1-\beta_0)y+(1-\beta)z)$ and $\lambda=1$.  
\begin{enumerate}
\item The $I_1$ case arises if $x\geq (1-\frac{(n-d-1)\pi^e}{n-1})\tau (y+z)+(y + \frac{d}{n-1}\beta z)$.  Since $\lambda(v_n)=0$, \eqref{eq:eps1-p} reduces to
\begin{align*}
p &= z + \pi^e [x + \frac{d}{n-1}(1-\beta)z + \frac{n-d-1}{n-1}p - y-z]\\
    &= \left(1 - \frac{n-d-1}{n-1}\pi^e\right)^{-1} \left(z + \pi^e[x +\frac{d}{n-1}(1-\beta)z -y-z]\right).
\end{align*}
As with $\epsilon_2$, taking the gradient of the second equation with respect to $\beta,\beta_0$ leads to
\begin{align*}
\frac{\partial p}{\partial \beta} &= -(1 - \frac{n-d-1}{n-1}\pi^e)^{-1}\pi^e\frac{d}{n-1}z \leq 0, \\
\frac{\partial v_n}{\partial \beta_0} &= 0.
\end{align*} 
Therefore the gradient $\nabla\epsilon_1$ is
\begin{align*}
\nabla\epsilon_1 &= \left(\frac{n-d}{n-1}\left(\left[\frac{1-\pi^e}{1-\frac{n-d-1}{n-1}\pi^e}\right]z\right) \, , \, y\right)^\T
\end{align*}
which shows $\nabla\epsilon_1 \in \bbr^2_+$.

\item The $I_3$ case arises if $x \leq (1 - \frac{(n-d-1)\pi^e(1)}{n-1})\tau[(1-\beta_0)y+(1-\beta)z] + (1-\beta_0)y$ for $\pi^e(1) = \frac{\beta z}{\beta_0 y + \beta z}c(1) + \pi^e (1-c(1))$ with $c(1) = q [1 - (\frac{(1-\beta_0)y + (1-\beta)z}{y+z})^{1/\tau}]$.  Since $\lambda(v_n)=1$, \eqref{eq:eps1-p} reduces to 
\begin{align*}
p &= (1-\beta)z + \pi^e(1)\left[x + \frac{d}{n-1}(1-\beta)z + \frac{n-d-1}{n-1}p - (1-\beta_0)y-(1-\beta)z\right]\\
    &= \left(1 - \frac{n-d-1}{n-1}\pi^e(1)\right)^{-1} \left((1-\beta)z + \pi^e(1)[x -(1-\beta_0)y - \frac{n-d-1}{n-1}(1-\beta)z]\right).
\end{align*}
As we will wish to determine the gradient of $\epsilon_1$ with respect to $\beta,\beta_0$ as above, let us first consider the gradient of $\pi^e(1)$
\begin{align*}
\frac{\partial \pi^e(1)}{\partial \beta} &= \frac{\beta_0 yz}{(\beta_0 y + \beta z)^2}c(1) + \frac{q}{\tau}\left(\frac{\beta z}{\beta_0 y + \beta z} - \pi^e\right)\frac{z}{(1-\beta_0)y+(1-\beta)z} (1 - c(1)),\\
\frac{\partial \pi^e(1)}{\partial \beta_0} &= -\frac{\beta yz}{(\beta_0 y + \beta z)^2}c(1) + \frac{q}{\tau}\left(\frac{\beta z}{\beta_0 y + \beta z} - \pi^e\right)\frac{y}{(1-\beta_0)y+(1-\beta)z} (1 - c(1)).
\end{align*}
We will first consider the derivative with respect to $\beta$ followed by $\beta_0$.
    \begin{itemize}
    \item First, consider the derivative of $p$ with respect to $\beta$
        \begin{align*}
        \frac{\partial p}{\partial \beta} &= \frac{-z + \frac{n-d-1}{n-1}\pi^e(1)z + \frac{\partial \pi^e(1)}{\partial \beta}\left(x - (1-\beta_0)y - \frac{n-d-1}{n-1}(1-\beta)z\right)}{1 - \frac{n-d-1}{n-1}\pi^e(1)}\\
            &\qquad + \frac{\left((1-\beta)z + \pi^e(1)\left[x - (1-\beta_0)y - \frac{n-d-1}{n-1}(1-\beta)z\right]\right)\frac{n-d-1}{n-1}\frac{\partial \pi^e(1)}{\partial \beta}}{\left(1 - \frac{n-d-1}{n-1}\pi^e(1)\right)^2}\\
            &= -z + \frac{x - (1-\beta_0)y}{\left(1 - \frac{n-d-1}{n-1}\pi^e(1)\right)^2}\frac{\partial \pi^e(1)}{\partial \beta}.
        \end{align*}
        Therefore $\frac{\partial \epsilon_1}{\partial \beta} \geq 0$ if and only if $\frac{\partial \pi^e(1)}{\partial \beta} \geq 0$. 
        Let us now consider $\frac{\partial \pi^e(1)}{\partial \beta}$; this is nonnegative if, and only if,
        \begin{align}
        \label{eq:Gamma} \pi^e \leq \frac{\beta z}{\beta_0 y + \beta z} + \frac{\tau}{q}\frac{(1-\beta_0)y+(1-\beta)z}{\beta_0 y + \beta z}\left[\left(\frac{y+z}{(1-\beta_0)y+(1-\beta)z}\right)^{\frac{q}{\tau}}-1\right]\left(1-\frac{\beta z}{\beta_0 y + \beta z}\right) =: \Gamma.
        \end{align}
        This is, trivially, true if $\pi^e \leq \frac{\beta z}{\beta_0 y + \beta z}$; assume that is not true (i.e., by assumption $\tau \leq q$).
        Let $\zeta := \frac{(1-\beta_0)y+(1-\beta)z}{\beta_0 y + \beta z}$, then
        \begin{align*}
        \Gamma &= \frac{\beta z}{\beta_0 y + \beta z} + \frac{\tau}{q}\zeta\left[\left(1 + \frac{1}{\zeta}\right)^{\frac{q}{\tau}}-1\right]\left(1-\frac{\beta z}{\beta_0 y + \beta z}\right).
        \end{align*}
        Therefore if $\xi := \frac{\tau}{q}\zeta\left[\left(1 + \frac{1}{\zeta}\right)^{\frac{q}{\tau}}-1\right] \geq 1$ for any $\tau \in (0,q]$ and $\zeta \geq 0$ then $\Gamma \geq 1$ and $\frac{\partial \pi^e(1)}{\partial \beta} \geq 0$.  Notably, for $\tau \in (0,q]$, $\xi$ is decreasing in $\zeta$.  As such, we need only consider the limit as $\zeta$ tends towards $\infty$; in fact, $\lim_{\zeta \to \infty} \xi = 1$.
    \item First, consider the derivative of $p$ with respect to $\beta_0$
        \begin{align*}
        \frac{\partial p}{\partial \beta_0} &= \frac{\pi^e(1)y + \frac{\partial \pi^e(1)}{\partial \beta_0}\left(x - (1-\beta_0)y - \frac{n-d-1}{n-1}(1-\beta)z\right)}{1 - \frac{n-d-1}{n-1}\pi^e(1)}\\
            &\qquad + \frac{\left((1-\beta)z + \pi^e(1)\left[x - (1-\beta_0)y - \frac{n-d-1}{n-1}(1-\beta)z\right]\right)\frac{n-d-1}{n-1}\frac{\partial \pi^e(1)}{\partial \beta_0}}{\left(1 - \frac{n-d-1}{n-1}\pi^e(1)\right)^2}\\
            &= \frac{\pi^e(1)y}{1 - \frac{n-d-1}{n-1}\pi^e(1)} + \frac{x - (1-\beta_0)y}{\left(1 - \frac{n-d-1}{n-1}\pi^e(1)\right)^2}\frac{\partial \pi^e(1)}{\partial \beta_0}.
        \end{align*}
        Therefore $\frac{\partial \epsilon_1}{\partial \beta_0} \geq 0$ if $\frac{\partial \pi^e(1)}{\partial \beta_0} \geq 0$.  Assume now that $\frac{\partial \pi^e(1)}{\partial \beta_0} < 0$.
        \begin{align*}
        \frac{\partial p}{\partial \beta_0} &= \frac{\pi^e(1)y}{1 - \frac{n-d-1}{n-1}\pi^e(1)} + \frac{x - (1-\beta_0)y}{\left(1 - \frac{n-d-1}{n-1}\pi^e(1)\right)^2}\frac{\partial \pi^e(1)}{\partial \beta_0}\\
            &\geq \frac{1}{1-\frac{n-d-1}{n-1}\pi^e(1)}\left(\pi^e(1)y + \tau\left[(1-\beta_0)y+(1-\beta)z\right]\frac{\partial \pi^e(1)}{\partial \beta_0}\right)\\
            &= \frac{y}{1-\frac{n-d-1}{n-1}\pi^e(1)}\left[\pi^e + \left(\frac{\beta z}{\beta_0 y + \beta z} - \pi^e\right)[q+(1-q)c(1)]\right.\\ 
            &\qquad \left.- \frac{\tau \left((1-\beta_0)y+(1-\beta)z\right)}{\beta_0 y + \beta z} \frac{\beta z}{\beta_0 y + \beta z} c(1)\right]\\
            &= \frac{y}{1-\frac{n-d-1}{n-1}\pi^e(1)}\bigg[(1-q)(1-c(1))\pi^e\\ 
            &\qquad \left.+ \left(q+(1-q)c(1) - \frac{\tau\left((1-\beta_0)y + (1-\beta)z\right)}{\beta_0 y + \beta z}c(1)\right)\frac{\beta z}{\beta_0 y + \beta z}\right]\\
            &= \frac{y}{1-\frac{n-d-1}{n-1}\pi^e(1)}\left[(1-q)(1-c(1))\pi^e + \left(1 - (1-q)\left(\frac{(1-\beta_0)y + (1-\beta)z}{y+z}\right)^{\frac{q}{\tau}}\right.\right.\\
            &\qquad \left.\left.-\tau\frac{(1-\beta_0)y+(1-\beta)z}{\beta_0 y + \beta z}\left(1 - \left(\frac{(1-\beta_0)y+(1-\beta)z}{y+z}\right)^{\frac{q}{\tau}}\right)\right)\frac{\beta z}{\beta_0 y + \beta z}\right]
        \end{align*}
        Therefore $\frac{\partial \epsilon_1}{\partial \beta_0} \geq 0$ if $\Psi := \tau\frac{(1-\beta_0)y+(1-\beta)z}{\beta_0 y + \beta z}\left[1 - \left(\frac{(1-\beta_0)y+(1-\beta)z}{y+z}\right)^{\frac{q}{\tau}}\right] \leq q$.
        First, notice that
        \begin{align*}
        \frac{\partial \Psi}{\partial \tau} &= \frac{(1-\beta_0)y+(1-\beta)z}{\beta_0 y + \beta z}g\left(\frac{(1-\beta_0)y+(1-\beta)z}{y+z}\right)\\
        g(\gamma) &= 1 - \gamma^{\frac{q}{\tau}} + \frac{q}{\tau}\gamma^{\frac{q}{\tau}}\log(\gamma).
        \end{align*}
        In fact, we only need consider $g$ with domain $[0,1]$ with $g(0) = 1$, $g(1) = 0$, and $g'(\gamma) = \frac{q}{\tau^2}\gamma^{\frac{q}{\tau}-1}\log(\gamma) \leq 0$, i.e., $\frac{\partial \Psi}{\partial \tau} \geq 0$.  As such, $\Psi \leq q$ if $\lim_{\tau \to \infty} \Psi \leq q$.
        \begin{align*}
        \lim_{\tau \to \infty} \Psi &= q h(\beta_0 y + \beta z), \quad h(\xi) := -\frac{y+z-\xi}{\xi}\log\left(\frac{y+z-\xi}{y+z}\right) \quad \forall \xi \in [0,y+z].
        \end{align*}
        It can, further, be shown that $\sup_{\xi \in [0,y+z]} h(\xi) = 1$ and the result is proven.
    \end{itemize}

\item Finally, consider the $I_2$ case.  Herein, we wish to consider $p$ explicitly in terms of $v_n$, i.e.,
    \begin{align*}
    p &= (1-\beta\lambda_n)z + \pi^e(\lambda_n)v_n\\
        &= \frac{(\beta_0-\beta)yz}{\beta_0 y + \beta z} + \frac{1 + \tau}{\tau} \frac{\beta z}{\beta_0 y + \beta z} v_n - \frac{1}{\left(\tau[y+z]\right)^{\frac{q}{\tau}}}\left(\frac{\beta z}{\beta_0 y + \beta z} - \pi^e\right)v_n^{\frac{q+\tau}{\tau}}
    \end{align*}
    since, given the $I_2$ setting,
    \begin{align*}
    \lambda_n &= \frac{y+z}{\beta_0 y + \beta z} - \frac{v_n}{\tau[\beta_0 y + \beta z]}\\
    \pi^e(\lambda_n) &= \pi^e + \left(\frac{\beta z}{\beta_0 y + \beta z} - \pi^e\right)\left[1 - \left(\frac{v_n}{\tau(y+z)}\right)^{\frac{q}{\tau}}\right].
    \end{align*}
    Additionally, following the construction of $A$, \eqref{eq:eps1-2} reduces to
    \begin{align*}
    v_n &= \frac{\tau}{1+\tau}\left(x + \frac{d}{n-1}(1-\beta)z + \frac{n-d-1}{n-1}p\right)
    \end{align*} 
    or, equivalently, $p = \frac{n-1}{n-d-1}\left[\frac{1+\tau}{\tau}v_n - \left(x + \frac{d}{n-1}(1-\beta)z\right)\right]$.
    Therefore, by construction, $\nabla\epsilon_1 = \left(\frac{n-d}{n-d-1}\left(\frac{1+\tau}{\tau}\frac{\partial v_n}{\partial \beta}+z\right) \, , \, y + \frac{n-d}{n-d-1}\frac{1+\tau}{\tau}\frac{\partial v_n}{\partial \beta_0}\right)^\T$.
    We will first consider the derivative with respect to $\beta$ followed by $\beta_0$.
    \begin{itemize}
    \item First, consider the derivative of $v_n$ with respect to $\beta$
        \begin{align*}
        &\underbrace{\left(1 - \frac{n-d-1}{n-1}\left[\frac{\beta z}{\beta_0 y + \beta z}\left(1 - \frac{q+\tau}{1+\tau}\left(\frac{v_n}{\tau[y+z]}\right)^{\frac{q}{\tau}}\right) + \pi^e\left(\frac{q+\tau}{1+\tau}\left(\frac{v_n}{\tau[y+z]}\right)^{\frac{q}{\tau}}\right)\right]\right)}_{G\left(\frac{v_n}{\tau[y+z]}\right)} \frac{\partial v_n}{\partial \beta}\\
        &= \frac{\tau}{1+\tau}\underbrace{\left(-\frac{d}{n-1}z - \frac{n-d-1}{n-1}\frac{\beta_0 yz}{\left(\beta_0 y + \beta z\right)^2}\left[(y+z) - \frac{1+\tau}{\tau}v_n + \frac{1}{\left(\tau[y+z]\right)^{\frac{q}{\tau}}}v_n^{\frac{q+\tau}{\tau}}\right]\right)}_{F_\beta(v_n)}
        \end{align*}
        Before studying $\frac{\partial \epsilon_1}{\partial \beta}$, we want to consider the mapping $G$.  In particular, for $G$, we want to consider a domain of $[0,1]$; we find that: $G(0) = 1 - \frac{n-d-1}{n-1}\frac{\beta z}{\beta_0 y + \beta z} \in (0,1]$; $G(1) = 1 - \frac{n-d-1}{n-1}\left(\frac{1-q}{1+\tau}\frac{\beta z}{\beta_0 y + \beta z} + \left(1 - \frac{1-q}{1+\tau}\right)\pi^e\right) \in (0,1]$; $G'(\gamma) = \frac{n-d-1}{n-1}\frac{q}{\tau}\frac{q+\tau}{1+\tau}\left(\frac{\beta z}{\beta_0 y + \beta z} - \pi^e\right)\gamma^{\frac{q}{\tau}-1}$ keeps a constant sign for any $\gamma \in [0,1]$.  Therefore $G(\gamma) \in (0,1]$ for any $\gamma \in [0,1]$.
        Therefore, $\frac{\partial \epsilon_1}{\partial \beta} \geq 0$ if $F_\beta(v) + G(\frac{v}{\tau[y+z]})z \geq 0$ for any $v \in [\tau((1-\beta_0)y+(1-\beta)z) , \tau(y+z)]$.
        \begin{align*}
        &F_\beta(v) + G(\frac{v}{\tau[y+z]})z \\
        &= z \left[-\frac{d}{n-1} - \frac{n-d-1}{n-1}\frac{\beta_0 y}{(\beta_0 y + \beta z)^2}\left[(y+z) - \frac{1+\tau}{\tau}v + \frac{1}{(\tau[y+z])^{\frac{q}{\tau}}}v^{\frac{q+\tau}{\tau}}\right]\right.\\
        &\qquad \left. + 1 - \frac{n-d-1}{n-1}\left[\frac{\beta z}{\beta_0 y + \beta z}\left(1 - \frac{q+\tau}{1+\tau}\left(\frac{v}{\tau(y+z)}\right)^{\frac{q}{\tau}}\right) + \pi^e\frac{q+\tau}{1+\tau}\left(\frac{v}{\tau(y+z)}\right)^{\frac{q}{\tau}}\right]\right]\\
        &= z \frac{n-d-1}{n-1}\left[1 - \left(\frac{\beta z}{\beta_0 y + \beta z} + \left(\frac{(y+z)-v/\tau}{\beta_0 y + \beta z}\right)\frac{\beta_0 y}{\beta_0 y + \beta z}\right)\right.\\
        &\qquad \left.+ \frac{\beta_0 y}{(\beta_0 y + \beta z)^2}v\left(1 - \left(\frac{v}{\tau(y+z)}\right)^{\frac{q}{\tau}}\right) + \frac{q+\tau}{1+\tau}\left(\frac{\beta z}{\beta_0 y + \beta z} - \pi^e\right)\left(\frac{v}{\tau(y+z)}\right)^{\frac{q}{\tau}}\right].
        \end{align*}
        This is, trivially, nonnegative if $\pi^e \leq \frac{\beta z}{\beta_0 y + \beta z}$; assume that is not true (i.e., by assumption $\tau \leq q$).  We will complete this proof in three parts. First, $F_\beta(v) + G(\frac{v}{\tau[y+z]})z \geq 0$ at $v = \tau((1-\beta_0)+(1-\beta)z)$ if, and only if, $\pi^e \leq \Gamma$ (as defined in~\eqref{eq:Gamma}); as determined above, this always holds.  Second, $F_\beta(v) + G(\frac{v}{\tau[y+z]})z \geq 0$ at $v = \tau(y+z)$; by substitution
        \begin{align*}
        F_\beta(\tau(y+z)) + G(1)z &= z \frac{n-d-1}{n-1}\left[1 - \left(\frac{1-q}{1+\tau}\frac{\beta z}{\beta_0 y + \beta z} + \left(1 - \frac{1-q}{1+\tau}\right)\pi^e\right)\right] \geq 0 \ .
        \end{align*}  
        Third, we wish to show that $F_\beta(v) + G(\frac{v}{\tau[y+z]})z$ is minimized at either $\tau((1-\beta_0)y+(1-\beta)z)$ or $\tau(y+z)$. 
        By the above formulation, $F_\beta(v) + G(\frac{v}{\tau[y+z]})z \geq 0$ if, and only if,
        \begin{align*}
        \Theta(v) &:= 1 - \left(\frac{\beta z}{\beta_0 y + \beta z} + \left(\frac{(y+z)-v/\tau}{\beta_0 y + \beta z}\right)\frac{\beta_0 y}{\beta_0 y + \beta z}\right)\\
        &\qquad + \frac{\beta_0 y}{(\beta_0 y + \beta z)^2}v\left(1 - \left(\frac{v}{\tau(y+z)}\right)^{\frac{q}{\tau}}\right) + \frac{q+\tau}{1+\tau}\left(\frac{\beta z}{\beta_0 y + \beta z} - \pi^e\right)\left(\frac{v}{\tau(y+z)}\right)^{\frac{q}{\tau}} \geq 0 \ .
        \end{align*}
        By taking derivatives, $\frac{\partial \Theta}{\partial v} \geq 0$ if, and only if,
        \begin{align*}
        \pi^e \leq \frac{\beta z}{\beta_0 y + \beta z} + \frac{1+\tau}{q}\underbrace{\left[\frac{1+\tau}{q+\tau}\left(\frac{\tau(y+z)}{v}\right)^{\frac{q}{\tau}} - 1\right]}_{\rho(v)}\frac{\beta_0 y}{(\beta_0 y + \beta z)^2}.
        \end{align*}
        Since $\frac{\partial \rho}{\partial v} < 0$, it follows that
        \begin{align*}
        &\inf_{v \in [\tau((1-\beta_0)y+(1-\beta)z),\tau(y+z)]} F_\beta(v) + G(\frac{v}{\tau[y+z]})z\\ 
        &= \min\{F_\beta(\tau((1-\beta_0)y+(1-\beta)z)) + G(\frac{(1-\beta_0)y+(1-\beta)z}{y+z})z \, , \, F_\beta(\tau(y+z)) + G(1)z\} \geq 0 
        \end{align*}
        and the result is proven.
    
    \item First, consider the derivative of $v_n$ with respect to $\beta_0$
        \begin{align*}
        &G\left(\frac{v_n}{\tau[y+z]}\right) \frac{\partial v_n}{\partial \beta} = \frac{\tau}{1+\tau}\underbrace{\frac{n-d-1}{n-1}\frac{\beta y z}{\left(\beta_0 y + \beta z\right)^2}\left[(y+z) - \frac{1+\tau}{\tau}v_n + \frac{1}{\left(\tau[y+z]\right)^{\frac{q}{\tau}}}v_n^{\frac{q+\tau}{\tau}}\right]}_{F_{\beta_0}(v_n)}
        \end{align*}
        where $G: [0,1] \to (0,1]$ is defined above.  Consider now $F_{\beta_0}$ on the domain $[0,\tau(y+z)]$
        \begin{align*}
        F_{\beta_0}(\tau(y+z)) &= 0 \\ 
        F_{\beta_0}'(v) &= \frac{n-d-1}{n-1}\frac{1}{\tau}\frac{\beta yz}{(\beta_0 y + \beta z)^2}\left[-(1+\tau) + (q+\tau)\left(\frac{v}{\tau(y+z)}\right)^{\frac{q}{\tau}}\right]\\
            &= -\frac{n-d-1}{n-1}\frac{1}{\tau}\frac{\beta yz}{(\beta_0 y + \beta z)^2}\left[(1-q) + (q+\tau)\left(1 - \left(\frac{v}{\tau(y+z)}\right)^{\frac{q}{\tau}}\right)\right] \leq 0 \ .
        \end{align*}
        Therefore, it immediately follows that $\frac{\partial v_n}{\partial \beta_0} \geq 0$ and, thus, also $\frac{\partial \epsilon_1}{\partial \beta_0} \geq 0$.

    \end{itemize}

\end{enumerate}

\end{document}